\documentclass[aps,prd,nofootinbib,english,10pt]{revtex4-2}
\usepackage[utf8]{inputenc}
\usepackage[T1]{fontenc}
\usepackage{graphicx,color} 
\usepackage{xcolor}
\usepackage{rotating}
\usepackage{amssymb}
\usepackage{amsfonts}
\usepackage{amsmath}
\usepackage{amsthm}
\usepackage{tensor}
\usepackage{pifont}       
\usepackage{array}
\usepackage{xspace}
\usepackage{mathtools} 
\usepackage{verbatim}
\usepackage{alphabeta}
\usepackage{comment}
\usepackage{enumitem}
\usepackage{physics}
\usepackage[colorlinks]{hyperref}
\usepackage{MnSymbol}
\usepackage{accents}

\newtheorem{theorem}{Theorem}[section]

\newtheorem{definition}[theorem]{Definition}

\newtheorem{lemma}[theorem]{Lemma}

\begin{document}

\title{On the Finsler variational nature of autoparallels in metric-affine geometry
}


\author{Lehel Csillag}
\email{lehel.csillag@unitbv.ro}
\author{Nicoleta Voicu}
\email{nico.voicu@unitbv.ro}
\affiliation{Faculty of Mathematics and Computer Science, Transilvania University, Iuliu Maniu Street 50, Brașov 500091, Romania}

\author{Salah Elgendi}
\email{salah.ali@fsc.bu.edu.eg, \ salahelgendi@yahoo.com}
\affiliation{Department of Mathematics, Faculty of Science\\
Islamic University of Madinah, KSA\\
Department of Mathematics, Faculty of Science\\
Benha University, Egypt}

\author{Christian Pfeifer}
\email{christian.pfeifer@zarm.uni-bremen.de}
\affiliation{ZARM, University of Bremen, 28359 Bremen, Germany.}

\begin{abstract}
In metric–affine geometry, autoparallels are generically non-variational, i.e.\,, they are not the extremals of any action integral. The existence of a parametrization-invariant action principle for autoparallels is a long-standing open problem, which is equivalent to the so-called Finsler metrizability of the connection -- that is, to the fact that these autoparallels can be interpreted as Finsler geodesics. 

In this article, we address this problem for the class of torsion-free affine connections with vectorial nonmetricity, which includes, as notable subcases, Weyl and Schrödinger connections. For this class, we determine the necessary and sufficient conditions for the existence of a Finsler Lagrangian that metrizes the connection (and depends only algebraically on the metric and on the nonmetricity defining vector field). In the cases where such a Finsler Lagrangian exists, we construct it explicitly. In particular, we show that a broad class of such connections is in fact Finsler metrizable, i.e.\,, the autoparallels of these connections are Finsler geodesics.

\end{abstract}

\maketitle

\tableofcontents

\newpage

\section{Introduction}
In general relativity, which is based on pseudo-Riemannian geometry, freely falling particles follow curves that are simultaneously extremals of the length functional defined through the spacetime metric and  autoparallels of the Levi-Civita connection. The variational property of the length functional provides a natural action principle for these autoparallels, making the connection between geometry and dynamics explicit. 

Going beyond general relativity, when one considers spacetime manifolds with connections that are not metric compatible, so-called  metric-affine theories of gravity \cite{Iosifidis:2021pta, Blagojevic:2013xpa,Hehl:1994ue, Puetzfeld:2004yg}, this equivalence no longer holds. In a generic metric-affine geometry, the metric and  the affine connection are independent, giving rise to two inequivalent notions of preferred curves: geodesics, defined as extremals of the metric length functional, and autoparallels, defined as curves whose tangent vectors are parallel transported by the connection. In general, these affine connection autoparallels do not extremize any length functional, which raises questions about their possible physical interpretation.
Actually, the question of which class of preferred curves should be interpreted as free-fall trajectories -- metric geodesics, autoparallels, or even an entirely different class -- has a long history and is still under debate in metric-affine gravity~\cite{HEHL1971225,PhysRevD.104.044031, Obukhov2015,IOSIFIDIS2024138498, Iosifidis:2025ano}. Much of this debate stems precisely from the typical lack of variationality of autoparallels.

Yet, we already know, see, e.g., \cite{Fuster:2020upk, Cheraghchi:2022zgv, Voicu_2024} that, at least for \textit{some} torsion-free connections with nonmetricity, autoparallels do still arise from a parametrization-invariant action. It turns out that this action is nothing but a Finslerian length measure for curves, which thus naturally links Finsler geometry and metric-affine geometry\footnote{Actually, even a stronger result holds, \cite{Szenthe}: if an ODE system on an analytic manifold admits an analytic Lagrangian, then it admits a 2-homogeneous one in the velocities -- which, if nondegenerate, is nothing but a Finsler Lagrangian. That is, in quite some cases, the request of parametrization invariance can even be removed.}.
Moreover, we have shown that only for very special subclasses of connections with nonmetricity with respect to an initially considered pseudo-Riemannian metric, there exists a second, 'effective' pseudo-Riemannian metric whose Levi-Civita connection coincides with the originally considered connection. Generically, a Finsler structure is needed.

As Finsler geometry is, by definition, the geometry of a manifold equipped with the most general notion of parametrization-invariant arc-length functional, we argue that this is the appropriate setting for discussing the existence of a parametrization-invariant action for autoparallels. 


This feature adds nicely to the recent developments in the application of Finsler geometry to gravitational physics, as a very promising candidate for a model of the gravitational field of kinetic gases \cite{Hohmann:2019sni,Hohmann:2018rpp} capable of providing a geometric explanation for the dark energy phenomenology in cosmology \cite{Heefer:2023a,Friedl-Szasz:2024vtu,Pfeifer:2025tda}, or for its applications to quantum gravity phenomenology \cite{Addazi:2021xuf,Amelino-Camelia:2014rga,Lobo:2020qoa}.

In this work, we address the question of the existence of a parametrization-invariant action for autoparallels of symmetric affine connections \textit{with vectorial nonmetricity}. These are defined by the property that their nonmetricity tensor is given by an algebraic expression of the metric components and of the components of a given one-form.  This class of connections includes as notable subcases the Weyl, Schrödinger and completely symmetric geometries. Weyl's connection stands out, as it is the only conformally invariant affine connection, and it has been recently used to construct gauge theories of gravity with Standard Model matter \cite{Ghilencea2023,GHILENCEA2024169562, CONDEESCU2025170125}. Schrödinger connections have the appealing property of preserving lengths of autoparallels \cite{csillag,Csillag:2025gnz, PhysRevD.109.024003, Schrodinger}, even in the absence of metric compatibility, making them evade Einstein's objections to Weyl geometry. Completely symmetric connections are interesting, since a theory linear in the Ricci scalar within this geometry has been recently proven \cite{Csillag:2025gnz} to be on-shell equivalent with Scherrer's influential kinetic k-essence \cite{Scherrer:2004au}. 

For the class of torsion-free affine connections with vectorial nonmetricity, we find an answer to the following questions:
\begin{enumerate}
\item What are the necessary and sufficient conditions to be satisfied by a given torsion-free connection with vectorial nonmetricity, such that it is \textit{Finsler metrizable}, i.e., its autoparallels arise as arc-length parametrized  geodesics of a Finsler Lagrangian?
\item What are the corresponding Finsler Lagrangians, which depend algebraically on the components of the metric and the one-form defining the affine connection?
\end{enumerate}

Finsler Lagrangians whose geodesics arise as autoparallels of an affine connection on spacetime are known under the name of \textit{Berwald-type}~\cite{szilasi2011waysberwaldmanifolds,Pfeifer:2019tyy,shen_unicorns,Berwald1926,Cheraghchi:2022zgv} ones. 
On the other hand, Finsler Lagrangians that depend algebraically only\footnote{This ansatz is, at least intuitively, justified by the remark that the canonical (typically, nonlinear) connection of a Finsler space depends differentiably on the Finsler Lagrangian, and not vice-versa.} on the components of a pseudo-Riemannian metric $a$ and a one-form $b$ are called of \textit{generalized $(\alpha,\beta)$-type,} (where $\alpha = \sqrt{|a_{\mu\nu}\dot x^\mu \dot x^\nu|}$ stands for the metric pseudo-norm of vectors $\dot x$ on spacetime and $\beta = b_\mu \dot x^\mu$ is the contraction of the one-form with a vector $\dot x$); these depend on a specific combination of $\alpha$ and $\beta$  and in addition they can depend on the metric norm of the one-form $|b|=\sqrt{|a^{\mu\nu} b_\mu b_\nu|}$. \\
In other words, our two above questions can be reformulated, in technical terms, as: \textit{Classify torsion-free affine connections with vectorial nonmetricity, that are metrizable by Berwald-type $(\alpha,\beta)$-Finsler structures. }

Thus, to answer the above questions, we proceed in two steps:
\begin{enumerate}
\item We first consider the so-called \textit{$(\alpha,\beta)$-Finsler Lagrangians} - which are the simplest and most commonly used Finsler functions in applications. They are singled out in the class of generalized $(\alpha,\beta)$-Finsler Lagrangians as they do not depend on $|b|$. For this class, we completely clarify under which conditions $(\alpha,\beta)$-Finsler functions are of Berwald type and their relation to affine connections with vectorial nonmetricity.
In previous works this classification has been obtained in positive-definite signature \cite{shen_unicorns}, and some partial classification is known in Lorentzian signature, namely for the specific subclass of \textit{generalized m-Kropina metrics} \cite{Heefer:2022sgt, Fuster:2020upk, 10.1063/5.0227285, Heefer:2024kfi}. 

A complete discussion of how these results relate  to vectorial nonmetricity, together with the specific conditions on the given affine connection -- and hence their relevance for metric-affine gravity and cosmology -- has not been explored until now and will be first presented here. 

\item After having presented a clear discussion of the simpler case of \textit{$(\alpha,\beta)$}-Finsler Lagrangians, we pass to the most general case.  More precisely, for a given connection with vectorial nonmetricity, we find the necessary and sufficient conditions such that it is metrizable by a \textit{generalized  $(\alpha,\beta)$}-Finsler Lagrangian.
With this, we identify for the first time that a quite large subclass of affine connections with vectorial nonmetricity  can be understood as canonical connections of a Finsler Lagrangian of Berwald type, given the defining one-form satisfies appropriate constraints. In particular, all these connections have variational autoparallels.
\end{enumerate}

The paper is structured as follows. In Section \ref{sec:VecNonMetGeom} we briefly review symmetric affine connections and specialize to the case of vectorial nonmetricity. Section \ref{ssec:berw} introduces the necessary tools from Finsler geometry and formally defines the notions of Finsler and pseudo-Riemannian metrizability. In Sections \ref{sec:(alpha,beta)} and \ref{sec:generalized_ab}, we discuss the metrizability of connections with vectorial nonmetricity by $(\alpha,\beta)$-metrics and respectively, by generalized $(\alpha,\beta)$-metrics. By reformulating the metrizability problem as a first-order PDE system, we derive necessary and sufficient conditions, leading to complete classifications, in both cases.
We conclude in Section \ref{sec:Discussion} by summarizing our results and outlining possible directions for future research.

\section{Preliminaries}
This section is a minimalistic review of the known notions and results to be used in the following discussion of this article, related, on one hand, to affine connections with nonmetricity in metric-affine geometry and, on the other hand, to Finsler geometry. 

In the following,  by $M$ we will always mean a connected 4-dimensional smooth manifold, equipped with local coordinates $x^{\mu}, \, \mu= 0,\dots,3 .$ 

\subsection{Symmetric affine connections with vectorial nonmetricity}\label{sec:VecNonMetGeom}
Let $a$ be a Lorentzian metric and $\nabla$ a torsion-free\footnote{In this paper, “torsion-free affine connection” and “symmetric affine connection” are considered equivalent and will be used interchangeably.} affine connection on $M$. In any local chart the connection coefficients can be decomposed as \cite{Csillag:2025gnz}
\begin{equation}\label{eq:connectiondecomposition}
    \Gamma^{\mu}{}_{\nu \rho}=\overset{\circ}{\Gamma}{}^{\mu}{}_{\nu \rho}+\tensor{D}{^\mu _\nu _\rho} \, .
\end{equation}

Here, $\overset{\circ}{\Gamma}{}^{\mu}{}_{\nu \rho}$ denotes the Christoffel symbols of the Levi-Civita connection constructed from the metric components $a_{\mu \nu}$
\begin{align}
    \overset{\circ}{\Gamma}{}^\mu{}_{\nu\rho}
    = \tfrac{1}{2} a^{\mu\lambda}
    \left(
        \partial_\nu a_{\lambda\rho}
        + \partial_\rho a_{\lambda\nu}
        - \partial_\lambda a_{\nu\rho}
    \right)\, .
\end{align}
The remaining contribution $D^{\mu}{}_{\nu \rho}$ is the distortion tensor, which is determined by the nonmetricity tensor
\begin{align}
    \tensor{Q}{_\mu _\nu _\rho}= - \nabla_{\mu} a_{\nu \rho} \, .
\end{align}
The nonmetricity tensor measures the failure of the metric to be covariantly constant, implying that lengths and angles are generally not preserved under parallel transport. In terms of $Q_{\mu \nu \rho}$, the distortion tensor takes the form
\begin{align}\label{eq:distortiondef}
    \tensor{D}{^\mu _\nu _\rho}=\frac{1}{2} \left( \tensor{Q}{_\nu _\rho ^\mu}+\tensor{Q}{_\rho ^\mu _\nu} -\tensor{Q}{^\mu _\nu _\rho} \right) \, .
\end{align}

Given an affine connection $\nabla$, autoparallel curves $\gamma:I \subseteq \mathbb{R} \to M$ are curves whose tangent vector is parallel transported along itself
\begin{equation}\label{eq:autoparallel}
    \nabla_{\dot \gamma} \dot \gamma=0 \, .
\end{equation}

\textbf{Variationality of autoparallels}. For a general symmetric affine connection, the autoparallel equations are typically non-variational -- that is, the system above cannot, in general, be obtained as the Euler-Lagrange equations 

\begin{equation}
    \frac{d}{d \tau} \frac{\partial L}{\partial \dot \gamma^{\mu}} - \frac{\partial L}{\partial \gamma^{\mu}}=0
\end{equation}
for any Lagrangian $L=L(x,\dot x)$, and where $\gamma^{\mu}=x^{\mu} \circ \gamma$ is the coordinate representation of a curve and $\dot \gamma^\mu$ is the derivative of $\gamma^\mu$ with respect to the parameter $\tau$. This lack of variationality of autoparallels raises questions about their physical interpretation.

An important exception is the Levi-Civita connection $\overset{\circ}{\nabla}$ associated with the metric $a_{\mu \nu}$. In this case, the autoparallel equation
\begin{equation}
    \overset{\circ}{\nabla}_{\dot \gamma} \dot \gamma=0
\end{equation}
is variational, as it is equivalent to the geodesic equation

\begin{equation}\label{eq:LCgeod}
    \ddot \gamma^{\mu} + \overset{\circ}{\Gamma}{}^{\mu}{}_{\nu \rho} \dot \gamma^{\nu} \dot \gamma^{\rho}=0 \, ,
\end{equation}
arising as the Euler-Lagrange equation of the metric Lagrangian $\overset{\circ}{L}(x,\dot x)=\sqrt{|a_{\mu \nu}(x) \dot x^{\mu} \dot x^{\nu}|}$.

As announced in the Introduction, actually, the most general case when a nondegenerate, parametrization-invariant action for autoparallels exists is the one of \textit{Finsler metrizable} connections, \cite{Voicu_2024}, to be discussed in this paper. 

\bigskip

\textbf{Connections with vectorial nonmetricity}. In the following, we will focus on an interesting class of torsion-free connections, whose nonmetricity tensor is completely determined by the metric and a single one-form (or equivalently, by a vector field). Such connections appear naturally in various extensions of general relativity \cite{Csillag:2025gnz, Zhao2022, Iosifidis:2022evi, Iosifidis:2018jwu,Iosifidis:2020gth,Andrei:2024vvy, Khodadi2025, PhysRevD.107.064008}, and have applications ranging from information geometry \cite{iosifidis} to  constraining metric-affine theories using symmetry  principles~\cite{Paci_2023,Mikura2025}. 

\begin{definition}A symmetric affine connection $\nabla$ on $M$ is said to have \textbf{vectorial nonmetricity} \cite{Csillag:2025gnz} if there exists a non-zero one-form $b=b_{\mu}(x) dx^{\mu}$ on $M$ and three constants $c_1,c_2,c_3 \in \mathbb{R}$, not all zero, such that the nonmetricity tensor takes the form
\begin{align}\label{eq:vectorialnonmetricity}
     Q_{\mu \nu \rho}=c_1 b_\mu a_{\nu \rho} + c_2 \left( b_\rho a_{\mu \nu} + b_{\nu} a_{\rho \mu} \right) + c_3 b_\mu b_\nu b_\rho  \, .
\end{align}
\end{definition}
\noindent Using equation \eqref{eq:distortiondef}, the corresponding distortion tensor becomes
\begin{equation}
D_{~\nu \rho }^{\mu }=\dfrac{1}{2}\left( 2c_{2}-c_{1}\right) b^{\mu }a_{\nu
\rho }+\dfrac{1}{2}c_{1}b_{\nu }\delta^{\mu}{}_{\rho}+\dfrac{1}{2}%
c_{1}b_{\rho }\delta^{\mu}{}_{\nu}+\dfrac{1}{2}c_{3}b^{\mu }b_{\nu }b_{\rho
} \, .  \label{eq:D_vector_nonmetricity}
\end{equation}

Within the class of connections with vectorial nonmetricity, three subclasses of particular interest arise for specific choices of the coefficients $c_1,c_2,c_3$, summarized in Table \ref{table:constraints} below.

\begin{table}[h!]
\centering
\begin{tabular}{|c|c|}
\hline
Geometry & Constraint on coefficients \\ \hline
Weyl & $c_2 = c_3 = 0$ \\ \hline
Schrödinger & $c_1 + 2 c_2 = 0, \quad c_3 = 0$ \\ \hline
Completely symmetric & $c_1 = c_2$ \\ \hline
\end{tabular}
\caption{Constraints on the coefficients $(c_1,c_2,c_3)$ that distinguish the main subclasses of vectorial nonmetricity.}
\label{table:constraints}
\end{table}

The choice of the coefficients $c_1,c_2,c_3$ determines how lengths, volumes, and angles change under parallel transport. In Weyl geometry, angles are preserved, while lengths are generally not, whereas in Schrödinger geometry autoparallels have fixed length, despite the presence of nonmetricity \cite{csillag}. Interestingly, the existence of a covariantly preserved volume form is equivalent to the symmetry of the Ricci tensor associated with a connection possessing vectorial nonmetricity. For a more detailed discussion of the geometric properties of these connections, we refer the reader to Appendix A of \cite{Csillag:2025gnz}. 

For connections with vectorial nonmetricity, the autoparallel equation reads
\begin{equation}
    \ddot \gamma^{\mu}+ \left( \overset{\circ}{\Gamma}{}^\mu{}_{\nu\rho} + b^\mu a_{\nu \rho} \left(\frac{2 c_2 - c_1}{2} \right) + \frac{c_1}{2} \delta^{\mu}{}_{\nu} b_{\rho}+\frac{c_1}{2} \delta^{\mu}{}_{\rho} b_{\nu} +\frac{c_3}{2} b^\mu b_\nu b_\rho \right) \dot \gamma^{\nu} \dot \gamma^{\rho}=0 \, . 
\end{equation}
As we will show, for a quite large class of such connections, the above equation admits a variational description, which can be achieved by passing to Finsler geometry.

\subsection{Finsler geometry} \label{ssec:berw}
This subsection briefly reviews the basic notions of (pseudo-)Finsler geometry
and, in particular, of Berwald-type geometry, that is needed in the following. It mainly relies on \cite{Voicu_2024}.~For more details, see~\cite{Bucataru, math-foundations} and references therein.

Denote with $\pi:TM \to M$ the tangent bundle. Given a local chart $(U, x^{\mu})$ on $M$, the naturally induced coordinates on $\pi^{-1}(U) \subset TM$ are $\left( x^{\mu}, \dot x^{\mu} \right)$, where $\dot x= \dot x^{\mu} \partial_{\mu} \in T_{x}M$. We will often write simply $\left( x^{\mu}, \dot x^{\mu} \right) \equiv (x, \dot x)$ when the chart is fixed and hence no confusion arises. The natural local basis of $T_{(x, \dot x)} TM$ is
\begin{equation}
    \left\{ \partial_{\mu}:= \frac{\partial}{\partial x^{\mu}}, \; \; \dot \partial_{\mu}:= \frac{\partial}{\partial \dot x^{\mu}} \right\} \, ,
\end{equation}
with dual basis of $T^{*}_{(x,\dot x)}TM$
\begin{equation}
    \left\{ dx^\mu, d \dot{x}^{\mu} \right\} \, .
\end{equation}

\begin{definition}\label{def:fins}
    A \textbf{(pseudo)-Finsler structure} \cite{BejancuFarran2000}  on $M$  is a smooth function $L:\mathcal{A} \to \mathbb{R}$ defined on a conic subbundle\footnote{A conic subbundle of $TM$ is an open subset $\mathcal{A} \subseteq TM \setminus \{ 0 \}$ such that the fiber $\mathcal{A}_{x}:=\mathcal{A} \cap T_{x}M$ is non-empty for all $x \in M$, and  $\mathcal{A}$ is stable under positive rescalings, that is, $(x,\dot x) \in \mathcal{A} \implies (x, \lambda \dot x) \in \mathcal{A}$ for all $\lambda>0$.} $\mathcal{A} \subseteq TM \setminus \{ 0 \}$, satisfying:
    \begin{enumerate}
    \item Positive 2-homogeneity: $L(x, \lambda \dot x)=\lambda^2 L(x, \dot x), \forall \lambda >0 .$
    \item Nondegeneracy: the Hessian $g_{\mu \nu}(x,\dot x)=\frac{1}{2} \frac{ \partial^2 L}{\partial \dot x^\mu \partial \dot x^\nu}(x, \dot x)$ is non-singular at all $(x,\dot x) \in \mathcal{A}$.
    \end{enumerate}
    The pair $(M,L)$ is called a \textbf{(pseudo)-Finsler space}.
\end{definition}
The above notion includes as subclasses \emph{classical Finsler spaces} (where $\mathcal{A}=TM \setminus \{0\}$ and $(g_{\mu\nu})$ is everywhere positive definite), and \textit{Finsler spacetimes} (where $(g_{\mu\nu})$ has Lorentzian signature on an appropriate subset of $\mathcal{A}$). In what follows, we will consider the most general case and, for simplicity, sometimes omit the prefix “pseudo-”.

Any Finsler structure $L$ extends continuously to $\dot x=0$ by $L(x,0)=0$. The conic subbundle $\mathcal{A}$ is the set of \textit{admissible vectors,} and the functions
    \begin{equation}\label{eq:finsmet2}
        g_{\mu \nu}(x,\dot x)=\frac{1}{2} \frac{\partial ^2 L}{\partial \dot x^\mu \partial \dot x^\nu}(x,\dot x)
    \end{equation}
    are the local components of the \textit{Finslerian metric tensor}, which is a well-defined mapping
    \begin{equation}
        g: \mathcal{A} \to T^{0}_{2}M, \; \; (x, \dot x) \mapsto g_{(x, \dot x)}=g_{\mu \nu}(x,\dot x) dx^{\mu} \otimes dx^{\nu}\, .
    \end{equation}

   A \textit{parametrized admissible curve} on a Finsler space $(M,L)$ is a smooth map 
    \begin{equation}
        \gamma:[a,b] \to M, \; \; \tau \mapsto \gamma (\tau) \, ,
    \end{equation}
    such that its tangent lift
    \begin{equation}
        \gamma^{TM}:[a,b] \to TM, \; \; \tau \mapsto (\gamma(\tau),\dot \gamma(\tau))\equiv\left(\gamma^{\mu}(\tau), \dot\gamma^\mu(\tau) \right)
    \end{equation}
    lies in $\mathcal{A}$. The Finslerian \textit{arc-length} of the admissible curve $\gamma$ is, by definition:
    \begin{equation}
      \ell_{\gamma}=\int_{\gamma^{TM}} ds:= \int_{a}^{b} \sqrt{|L\left(\gamma(\tau),\dot{\gamma}(\tau) \right) |} d \tau \, .
    \end{equation}

    \textit{Geodesics} of a Finsler structure $L$ are the critical points of the functional $\gamma \mapsto \ell_{\gamma}$ and, in arc-length parametrization, they satisfy 
    \begin{equation}
        \ddot\gamma^\mu+2 G^{\mu} \left(\gamma, \dot\gamma \right) =0 \, ,
    \end{equation}
     where the functions $G^{\mu}$, called the \textit{spray coefficients} are defined in each coordinate chart by
    \begin{equation}
        G^{\mu}(x,\dot x):=\frac{1}{4} g^{\mu \nu} \left( \dot{x}^\sigma \partial_{\sigma} \dot \partial_{\nu} L(x, \dot x) - \partial_{\nu}L(x, \dot x)\right) \, .
    \end{equation}



The spray coefficients $G^{\mu}$ give rise to a canonical Finslerian extension of the Levi-Civita connection -- which is typically a \textit{nonlinear connection} on $\mathcal{A} \subseteq {TM \setminus \{0\}}$, as defined below.

\begin{definition}
    A \textbf{nonlinear connection} $N$ on $\mathcal{A} \subseteq TM \setminus \{0\}$ is a smooth assignment
    \begin{equation}
       \mathcal{A} \ni (x,\dot x) \mapsto H_{(x, \dot x)} \mathcal{A} \subseteq T_{(x, \dot x)} \mathcal{A} \, ,
    \end{equation}
    where $H_{(x,\dot x)} \mathcal{A}$ is a $4$-dimensional horizontal subspace complementary to the vertical subspace
    \begin{equation}
        V_{(x, \dot x)} \mathcal{A}:=\ker d \pi_{(x, \dot x)}= Span\left\{ \dot \partial_{\mu} \right\} \, .
    \end{equation}
\end{definition}
\noindent Hence, at each point $(x, \dot x) \in \mathcal{A}$, the tangent space splits as
    \begin{equation}
        T_{(x, \dot x)} \mathcal{A}=H_{(x,\dot x)} \mathcal{A} \oplus V_{(x, \dot x)} \mathcal{A} \, ,
    \end{equation}
with the nonlinear connection giving a \textit{local adapted basis} of $T \mathcal{A}$
\begin{equation}
    \{ \delta_{\mu}:=\partial_{\mu}- G^{\nu}{}_{\mu} \dot \partial_{\nu}\, ;\,  \dot \partial_{\mu} \}, \; \; H \mathcal{A}=Span \left\{\delta_{\mu} \right\}, \; \; V \mathcal{A}=Span \left\{ \dot \partial_{\mu}  \right\} \, .
\end{equation}
The real-valued functions $G^{\nu}{}_{\mu}=G^{\nu}{}_{\mu}(x, \dot x)$ defined on $\pi^{-1}(U)$,  called \textit{nonlinear connection coefficients}, locally specify the connection.

\bigskip

In a Finsler space $(M,L)$ the spray coefficients canonically induce a nonlinear connection via
\begin{equation}\label{eq:geods}
    G^{\nu}{}_{\mu}(x, \dot x)=\dot {\partial}_{\mu} G^{\nu}(x, \dot x)\,  ,
\end{equation}
which is called the \textit{canonical nonlinear connection} of $(M,L)$. A key property, used in the following, is that the Finsler structure $L$ is horizontally constant with respect to this connection. In terms of the adapted basis, this condition reads
\begin{equation}
    \delta_{\mu} L=0 \, , \; \;  \text{or equivalently,} \; \; \partial_{\mu} L - G^{\nu}{}_{\mu} \dot{\partial}_{\nu} L=0 \; .
\end{equation}

In particular, if $(M,L)$ is pseudo-Riemannian, that is, $L=a_{\mu \nu}(x) \dot x^{\mu} \dot x^{\nu}$ is quadratic in $\dot x$, then
\begin{equation}
    G^{\nu}{}_{\mu}(x, \dot x)=\overset{\circ}{\Gamma}{}{^\nu}{}_{\mu \rho}(x) \dot x^{\rho}
\end{equation}
is linear in $\dot x$. Though, typically, $G^{\nu}{}_{\mu}(x, \dot x)$ are not linear, but just $1$-homogeneous in $\dot x$.

\bigskip
A special and for us most important class of Finsler spaces  consists of \textit{Berwald spaces}, which have nontrivial Finsler structures ($L$ non-quadratic in $\dot x$), yet, their canonical connection is linear. 

\begin{definition}
    A Finsler space $(M,L)$ is \textbf{of Berwald-type} if in one (and then in any) local chart its canonical spray coefficients are quadratic in $\dot x$.
\end{definition}
This is equivalent to any of the following conditions:
\begin{enumerate}
    \item $N$ descends into a well-defined \textit{canonical affine connection}  $\nabla$ on $M$, with local coefficients $\Gamma^{\mu}{}_{\nu \rho}$ given by
\begin{equation}
2G^{\mu}(x,\dot x)=\Gamma^{\mu}{}_{\nu\rho}(x)\dot x^{\nu}\dot x^{\rho}
\iff
G^{\mu}{}_{\nu}(x,\dot x)=\Gamma^{\mu}{}_{\nu\rho}(x)\dot x^{\rho}
\iff
\dot\partial_{\rho}G^{\mu}{}_{\nu}(x,\dot x)=\Gamma^{\mu}{}_{\nu\rho}(x) \, .
\end{equation}
    \item Its arc-length parametrized geodesics coincide with the autoparallels of a symmetric affine connection $\nabla $ on $M$, with coefficients $\Gamma^{\mu}{}_{\nu \rho}(x)$:
    \begin{equation}
        \ddot \gamma^\mu + \Gamma^\mu{}_{\nu\rho}(\gamma)\dot \gamma^\nu \dot \gamma^\rho=0 \, .
    \end{equation}
\end{enumerate}

Berwald-type Finsler spaces therefore provide a natural setting for addressing our main question:

\begin{center}
\textit{Given a symmetric affine connection with vectorial nonmetricity $\nabla$, does there exist a Finsler space $(M,L)$ of Berwald type, such that the geodesics of $(M,L)$ coincide with the autoparallels of $\nabla$?}
\end{center}
Some definitions are in place here.

\begin{definition}\label{def:PseudoRiemMetrizability}
    A symmetric affine connection $\nabla$ on a manifold $M$ is said to be:
    \begin{enumerate}
        \item[$(i)$] \textbf{(Pseudo)-Riemann-metrizable} if there exists a pseudo-Riemannian metric $a$ such that $\nabla$ coincides with the Levi-Civita connection $\overset{\circ}{\nabla}$ of $a$, that is, locally we have
    \begin{equation}
         \Gamma^\mu{}_{\nu\rho} = \overset{\circ}{\Gamma}{}^\mu{}_{\nu\rho}[a] \,.
    \end{equation}
    In this case, the pseudo-Riemannian metric $a$ is said to metrize $\nabla \,$.
    \item[$(ii)$] \textbf{(Pseudo)-Finsler-metrizable} if there exists a Berwald-type pseudo-Finsler structure $(M,L)$, whose canonical affine connection is $\nabla$. In this case, the Berwald-type pseudo-Finsler space $(M,L)$ is said to metrize $\nabla\,$.
    \end{enumerate}
\end{definition}

\noindent \textbf{Remarks.}
\begin{enumerate}
\item  Saying that  $\nabla$ is pseudo-Finsler metrizable is equivalent to saying that there exists a nondegenerate, positively 2-homogeneous Lagrangian $L=L(x,\dot{x})$ (i.e., a reparametrization-invariant arc length functional) whose arc-length parametrized geodesics are the autoparallels of  $\nabla$. 
\item  Pseudo-Riemann metrizability obviously implies pseudo-Finsler metrizability. Yet, the converse statement is more nuanced: whereas metrizability by a classical ($TM\setminus \{0\}$-smooth, positive definite) Finsler metric does imply Riemann metrizability (a result known as Szabo's Theorem, \cite{Szabo}), there exist multiple examples of connections that are metrizable by pseudo-Finsler spacetime functions $L$, yet are \textit{not} pseudo-Riemann metrizable, see, e.g., \cite{Fuster:2020upk, Cheraghchi:2022zgv, Voicu_2024}. These are explicit counterexamples to the claim in a very recent paper \cite{Heisenberg} that variationality of autoparallels equates pseudo-Riemann metrizability.
\end{enumerate}

There exist multiple ways of characterizing the pseudo-Finsler metrizability of a given affine connection, see, e.g.,~\cite{BucataruDahl2009, Pfeifer:2019tyy}.  In the following, we will use the computationally simplest one, introduced by Z. Muzsnay in \cite{Muzsnay2006} and recently used by us in \cite{Cheraghchi:2022zgv} and \cite{Voicu_2024}.  This is based on the statement that the Finsler metrizability of a symmetric affine connection $\nabla$ on a smooth manifold $M$ is equivalent to the existence of  a positively 2-homogeneous, nondegenerate solution of the PDE system
\begin{equation}\label{def:Berwald_conditions}
    \delta_{\mu} L=0, \; \; \mu=0,\dots,3 \, .
\end{equation}

\textbf{Pseudo-Finsler metrizability of connections with vectorial nonmetricity.} Given a pseudo-Riemannian metric $a = a_{\mu\nu}(x)dx^\mu dx^\nu$ and a nonzero one-form $b=b_{\mu}(x) dx^{\mu}$, we denote by
   \begin{equation}\label{eq:AB}
       A:=a_{\mu \nu}(x) \dot{x}^{\mu}\dot{x}^{\nu}, \; \; \; \;  B:= b_{\mu} \dot x^{\mu} \, 
   \end{equation}
their values on an arbitrary vector $\dot{x}\in T_xM$, at any $x\in M$. In the following, we will refer loosely to both $A$  and $a$, respectively, $B$ and $b$ as the Riemannian metric, respectively, the one-form.  
In the following, indices will be raised and lowered by $a$; consistently with this convention\footnote{Explicitly, we have $A=a_{\mu \nu } \dot x^{\mu} \dot x^{\nu}$, so its fiber derivative is $\dot{\partial}_{\mu} A=\dot\partial_\mu\left( a_{\sigma \nu} \dot x^\sigma \dot x^\nu \right)=2 a_{\mu \nu} \dot x^{\nu}=2 \dot x_{\mu}$.}, we also denote
\begin{equation}
2\dot{x}_{\mu }:= \dot{\partial}_{\mu} A.
\end{equation}

As stated in the previous subsection, see \eqref{eq:D_vector_nonmetricity}, a symmetric affine connection with vectorial nonmetricity $\nabla$ is locally given by 
\begin{align}\label{eq:affinePi}
    \Gamma^\mu{}_{\nu\rho}  
    = \overset{\circ}{\Gamma}{}^\mu{}_{\nu\rho} + D^\mu{}_{\nu\rho} 
    = \overset{\circ}{\Gamma}{}^\mu{}_{\nu\rho} + b^\mu a_{\nu \rho} \left(\frac{2 c_2 - c_1}{2} \right) + \frac{c_1}{2} \delta^{\mu}{}_{\nu} b_{\rho}+\frac{c_1}{2} \delta^{\mu}{}_{\rho} b_{\nu} +\frac{c_3}{2} b^\mu b_\nu b_\rho\,.
\end{align}
Passing to the tangent  $TM$, $\nabla$ induces a nonlinear connection with coefficients $G^{\mu}{}_{\nu}=\Gamma^{\mu}{}_{\nu \rho} \dot x^{\rho}$; that is, the horizontal basis elements are
\begin{equation}\label{eq:CoeffsDnumu}
    \delta_{\mu}:=\partial_{\mu}-\dot{x}^{\rho} \Gamma^{\nu}{}_{\mu \rho} \dot \partial_{\nu} \; = \overset{\circ}{\delta}_{\mu} - D^{\nu}{}_{\mu} \dot \partial_{\nu} \, ,  
\end{equation}
where  we have denoted:
\begin{equation} \label{def:D^mu_nu}
 D^{\nu}{}_{\mu}:=D^{\nu}{}_{\mu \rho}(x) \dot{x}^{\rho},  \qquad \overset{\circ}{\delta}_{\mu}=\partial_{\mu}-\overset{\circ}{\Gamma}{}^{\nu}{}_{\mu \rho} \dot x^{\rho} \dot{\partial}_{\nu}.
\end{equation}
In the specific case of vectorial nonmetricity, 
we obtain that the contracted distortion coefficients $D^{\mu}_{~\nu}$:
\begin{equation} \label{eq:D^mu_nu}
D_{~\nu }^{\mu } =D_{~\nu \rho }^{\mu }\dot{x}^{\rho }=b^{\mu }\dot{x}
_{\nu }\left( c_{2}-\dfrac{c_{1}}{2}\right) +\dfrac{c_{1}}{2}\left( \delta
_{~\nu }^{\mu }B+\dot{x}^{\mu }b_{\nu }\right) +\dfrac{c_{3}}{2}b^{\mu
}b_{\nu }B \,. 
\end{equation}
It further leads to the expressions, which we will need in the upcoming derivations
\begin{eqnarray}
D_{~\mu }^{\nu }\dot{x}_{\nu } &=&c_{2}B\dot{x}_{\mu }+\dfrac{1}{2}\left(
c_{3}B^{2}+c_{1}A\right) b_{\mu }\, ,  \label{eq:D_dot_x}\\
D_{~\mu }^{\nu }b_{\nu } &=&\left( c_{2}-\dfrac{1}{2}c_{1}\right)
\left\langle b,b\right\rangle \dot{x}_{\mu }+\left( c_{1}+\dfrac{
c_{3}\left\langle b,b\right\rangle }{2}\right) Bb_{\mu } \, ,\label{eq:D_b}
\end{eqnarray} 
where 
\begin{align}
\left\langle b,b\right\rangle = a^{\mu\nu}(x)b_\mu b_\nu\,.
\end{align}

In the following sections, we will completely integrate the pseudo-Finsler metrizability conditions \eqref{def:Berwald_conditions}, for an arbitrary 4-dimensional metric-affine structure with vectorial nonmetricity, i.e., for horizontal basis vectors given by~\eqref{eq:CoeffsDnumu}-\eqref{eq:D^mu_nu}.
The only restriction we will impose \textit{a priori} is that the Berwald-Finsler Lagrangian to be determined has an algebraic dependence on the input data  $a_{\mu\nu}, b_{\mu}$ (note: whereas we cannot exclude a differential dependence of $L$ on $a_{\mu\nu}, b_{\mu}$, \textit{i.e.}, on $\Gamma^{\mu}_{~\nu\rho}$, algebraic dependence is what we expect in most cases, as the canonical connection coefficients $\Gamma^{\mu}_{~\nu\rho}$ depend differentiably on $L$, and not vice-versa).

We begin our investigation by studying the Berwald condition for $(\alpha,\beta)$-Finsler structures in the next Section \ref{sec:(alpha,beta)}, as they are a separate branch of the general result and they are easy to handle computationally, before we continue with generalized $(\alpha,\beta)$-Finsler structures in section \ref{sec:generalized_ab}.

\section{Particular case: \texorpdfstring{$(\alpha,\beta)$}{alphabeta}-metrizability}
\label{sec:(alpha,beta)}
Assume that $a$ is a pseudo-Riemannian metric with Levi-Civita connection $\overset{\circ}{\nabla}$, $\nabla = \overset{\circ}{\nabla}+D$ is a torsion-free connection with vectorial nonmetricity as above. Moreover, let $b$ be the one-form on $M$ that defines the vectorial nonmetricity.

By definition, see, e.g., \cite{shen_unicorns}, an \textit{${(\alpha,\beta)}$-metric} is a pseudo-Finsler structure $L$ defined as
   \begin{equation}  \label{def_(alpha,beta)}
       L(x,\dot x)=A \Phi(s), \; \; s:=\frac{B^2}{A},
   \end{equation}
where $\Phi$ is a nontrivial real-valued smooth function of one variable\footnote{the precise domain of definition for $s$ will be typically considered as the maximal one where $\Phi$ is well defined and smooth.} and $A$ and $B$ are defined as in \eqref{eq:AB}.

Note that in \eqref{def_(alpha,beta)}, we use a slightly different notation than typically used in the Finsler literature, where $s$ denotes the ratio $\frac{B}{\sqrt{A}}$; we prefer equation \eqref{def_(alpha,beta)} (also used in {\cite{Fuster:2018djw,Voicu:2023zem}}), since it allows us to identify more directly polynomial expressions in $\dot{x}$, which will be essential in our reasoning below.

Another important remark is that the ansatz \eqref{def_(alpha,beta)} automatically implies the positive 2-homogeneity of $L$ with respect to the vector variable $\dot{x}$; it is hence sufficient to look for nondegenerate solutions of the PDE system \eqref{def:Berwald_conditions} with input data \eqref{eq:D^mu_nu}.

\bigskip

The main result of this section is formulated as follows.

\begin{theorem}\label{thm:metrizability_classification_alpha_beta}
   A connection with vectorial nonmetricity, locally given by
    \begin{align}
    \Gamma^\mu{}_{\nu\rho} = \overset{\circ}{\Gamma}{}^\mu{}_{\nu\rho} + b^\mu a_{\nu \rho} \left(\frac{2 c_2 - c_1}{2} \right) + \frac{c_1}{2} \delta^{\mu}_{\nu} b_{\rho}+\frac{c_1}{2} \delta^{\mu}_{\rho} b_{\nu} +\frac{c_3}{2} b^\mu b_\nu b_\rho
\end{align}
where $c_1,c_2,c_3$ are not all zero and the 1-form $b$ is not absolutely parallel with respect to the Levi-Civita connection $\overset{\circ}{\nabla} $ of the Riemannian metric $a$, is pseudo-Finsler metrizable by a Berwald-type $(\alpha,\beta)$-metric $L=A \Phi$, with $\Phi=\Phi(s)$, if and only if one of the following happens:
\begin{enumerate}
    \item $c_2=c_3=0$ (Weyl connections) and there exists a constant  $\lambda \in \mathbb{R}\setminus \{0,1\}$, such that the defining one-form $b$ satisfies
    \begin{equation} \label{eq:constraints_b_alpha_beta1}
        \overset{\circ}{\nabla}_{\mu} b_{\nu}=\frac{c_1}{2} \left( - \langle b, b \rangle a_{\mu \nu} + \left( \frac{1}{\lambda} +1 \right) b_{\mu} b_{\nu} \right).
    \end{equation}
    In this case, $L$ is given by a power law 
    \begin{equation}
        L=\kappa A s^{\lambda}, \; \; \kappa \in \mathbb{R}^{*}.
    \end{equation}

    \item $c_2=0, c_3 \neq 0$ and there exists $\tau \in \mathbb{R}$, $\tau \neq \frac{c_1}{c_3}$, such that the defining one-form $b$ satisfies
    \begin{equation} \label{eq:constraints_b_alpha_beta2}
        \overset{\circ}{\nabla}_{\mu} b_{\nu}=\frac{c_3}{2} \left(  -\frac{c_1}{c_3} \langle b,b \rangle a_{\mu \nu} + \left( \frac{c_1}{c_3} + \tau + \langle b, b \rangle \right) b_{\mu} b_{\nu} \right).
    \end{equation}
    In this case:
    \begin{enumerate}
        \item[$(i)$] If $c_1 \neq 0, \tau \neq 0$, then $L$ is a generalized $m$-Kropina metric
        \begin{equation}
            L =\kappa A s^{\tfrac{c_1}{\tau c_3}}\left( s+\tau \right) ^{1-\tfrac{ c_1 }{\tau c_3 }},~\ \kappa \in \mathbb{R}^{\ast } \, ,
        \end{equation}
           \item[$(ii)$] If $c_1=0, \tau \neq 0$ (completely symmetric connections), then $L$ is pseudo-Riemannian
        \begin{equation}
            L=\kappa \left( \tau A+B^{2}\right), ~\ \kappa \in \mathbb{R}^{\ast } \,,
        \end{equation}
              \item[$(iii)$] If $c_1 \neq 0, \tau =0$, then $L$ is nondegenerate and of exponential type
        \begin{equation}
            L =\kappa Ase^{-\tfrac{c_1 }{c_3 s}}=\kappa B^2 e^{-\tfrac{c_1 }{c_3 s}}.
        \end{equation}
    \end{enumerate}
\end{enumerate}
\end{theorem}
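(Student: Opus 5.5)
We substitute the ansatz $L=A\Phi(s)$ into the metrizability system \eqref{def:Berwald_conditions}, $\delta_\mu L=0$, with $\delta_\mu=\overset{\circ}{\delta}_\mu-D^\nu{}_\mu\dot\partial_\nu$. The plan has three parts: compute the two pieces, split the resulting identity into polynomial coefficients, and then analyze the resulting conditions.

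\textbf{Step 1: compute both pieces of $\delta_\mu L$.} Since $\overset{\circ}{\delta}_\mu A=0$, we get $\overset{\circ}{\delta}_\mu B=\dot x^\rho\overset{\circ}{\nabla}_\mu b_\rho$. Hence
\[
\overset{\circ}{\delta}_\mu L=2B\Phi'\,\overset{\circ}{\nabla}_\mu b_\rho\,\dot x^\rho .
\]
For the vertical part we use $\dot\partial_\nu L=2\dot x_\nu(\Phi-s\Phi')+2Bb_\nu\Phi'$ together with \eqref{eq:D_dot_x} and \eqref{eq:D_b}. This gives
\[
D^\nu{}_\mu\dot\partial_\nu L=2(\Phi-s\Phi')\Big(c_2B\dot x_\mu+\tfrac12(c_3B^2+c_1A)b_\mu\Big)+2B\Phi'\Big((c_2-\tfrac{c_1}{2})\langle b,b\rangle\dot x_\mu+(c_1+\tfrac{c_3}{2}\langle b,b\rangle)Bb_\mu\Big).
\]
Setting the difference to zero, we get an equation of the form
\[
B\Phi'\,\overset{\circ}{\nabla}_\mu b_\rho\dot x^\rho=P(s)\,B\dot x_\mu+R(s)\,A\,b_\mu,
\]
with $P,R$ explicit combinations of $\Phi$ and $\Phi'$ whose coefficients involve $\langle b,b\rangle$.

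\textbf{Step 2: separate by polynomial structure in $\dot x$.} The left side is $B$ times a linear form in $\dot x$. If $\Phi'=0$ on an open set, then $L$ is a multiple of $A$ there. The equation then forces $c_2B\dot x_\mu+\frac12(c_3B^2+c_1A)b_\mu=0$ identically, which contradicts $c_i$ not all zero (recall $b\neq0$). So $\Phi'\neq0$, and we divide.

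Then $\overset{\circ}{\nabla}_\mu b_\rho\dot x^\rho$ is linear in $\dot x$, while the right side divided by $B\Phi'$ is a combination of $\dot x_\mu$ and $(A/B)b_\mu=b_\mu/s\cdot B$. Contracting with suitable vectors and comparing, we show that both $P/\Phi'$ and $R/(s\Phi')$ must be constants, say $p$ and $r$. Indeed, the left side depends on $\dot x$ only linearly, whereas $s$ varies nonlinearly along the fibers; the coefficient functions can then only be constant, using that $\dot x_\mu$ and $b_\mu$ are independent for generic $\dot x$. This yields
\[
\overset{\circ}{\nabla}_\mu b_\rho=p\,a_{\mu\rho}+r\,b_\mu b_\rho ,
\]
where $p,r$ may be functions of $x$ through $\langle b,b\rangle$, but must be fiber-constant. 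The condition $b$ not parallel excludes $p=r=0$.

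\textbf{Step 3: the two resulting ODEs.} The two conditions "$P=p\Phi'$" and "$R=rs\Phi'$" are two first-order linear ODEs for $\Phi$, each of the form $\alpha_i(s)\Phi=\beta_i(s)\Phi'$. Compatibility requires the two to be proportional. Comparing the coefficients of $\Phi$:
\begin{itemize}
\item The $\dot x_\mu$ equation has $\Phi$-coefficient $c_2$ and $\Phi'$-coefficient affine in $s$.
\item The $b_\mu$ equation has $\Phi$-coefficient $\frac12(c_3 s+c_1)$.
\end{itemize}
Eliminating $\Phi'/\Phi$ gives a polynomial identity in $s$. I expect this forces $c_2=0$, since otherwise the identity overdetermines the system into $\Phi'=0$ or $b$ parallel. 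With $c_2=0$, the first equation gives $p=-\frac{c_1}{2}\langle b,b\rangle$.

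The second equation becomes
\[
\frac{\Phi'}{\Phi}=\frac{c_3s+c_1}{2(\dots)s(s+\tau)}\cdot(\dots),
\]
where we introduce the constant $\tau$ via $r=\frac{c_3}{2}(\frac{c_1}{c_3}+\tau+\langle b,b\rangle)$, or via $\lambda$ when $c_3=0$. Partial fractions then give the cases:
\begin{itemize}
\item $c_3=0$: $\Phi=\kappa s^\lambda$, the power law.
\item $c_1,\tau\neq0$: $\Phi=\kappa s^{c_1/(\tau c_3)}(s+\tau)^{1-c_1/(\tau c_3)}$.
\item $c_1=0$: $\Phi=\kappa(s+\tau)$.
\item $\tau=0$: $\Phi=\kappa se^{-c_1/(c_3 s)}$.
\end{itemize}
Constancy of $\lambda$ and $\tau$ (rather than dependence on $x$) follows because $\Phi$ may not depend on $x$.

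\textbf{Step 4: the excluded values.} These are where $L$ becomes degenerate or quadratic-degenerate.
\begin{itemize}
\item $\lambda=0$ gives $L\propto A$, contradicting the requirement that $b$ is not parallel.
\item $\lambda=1$ gives $L\propto B^2$, which is degenerate.
\item $\tau=c_1/c_3$ gives $L\propto B^2$ again.
\end{itemize}
Nondegeneracy in the remaining cases is checked via the standard determinant formula for $(\alpha,\beta)$ metrics. Conversely, each listed pair ($b$-constraint, $\Phi$) satisfies $\delta_\mu L=0$ by direct substitution, which gives sufficiency.

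\textbf{Main obstacle.} I expect Step 2/3 to be the hardest: rigorously justifying that the fiber functions must be constant, and handling the compatibility of the two ODEs that forces $c_2=0$. I would first carry out the polynomial-coefficient comparison after multiplying through by $A$ and the denominators.
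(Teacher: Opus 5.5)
Your Steps 1--2 are a legitimate alternative to the paper's route (Lemmas~\ref{lem:rho1}--\ref{prop:sufficientcondalphabeta}). By comparing the coefficients of $B\dot x_\mu$ and $b_\mu$ you reach the torse-forming condition $\overset{\circ}{\nabla}_\mu b_\rho=p\,a_{\mu\rho}+r\,b_\mu b_\rho$ before using anything about $c_2$. The fiber-constancy you assert is true, though your justification for it is only heuristic.

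The genuine gap is Step~3, and it cannot be filled: compatibility of your two ODEs does \emph{not} force $c_2=0$.
\begin{itemize}
\item For $c_2\neq0$ the $\dot x_\mu$-equation reads $c_2\Phi=\Phi'\,(c_2s+K_1)$ with $K_1=p+(\tfrac{c_1}{2}-c_2)\langle b,b\rangle$. Hence $\Phi=\kappa(s+\sigma)$, with $\sigma=K_1/c_2$ constant because $\Phi$ is $x$-independent.
\item Substituting into the $b_\mu$-equation $(c_3s+c_1)\Phi=s\Phi'\,(c_3s+2r-c_1-c_3\langle b,b\rangle)$ leaves exactly $c_1\sigma=0$ and $2r=2c_1+c_3(\sigma+\langle b,b\rangle)$.
\item Since $\sigma=0$ gives the degenerate $L=\kappa B^2$, you get $c_1=0$ and the consistent branch
\[
\overset{\circ}{\nabla}_\mu b_\nu=(\sigma+\langle b,b\rangle)\Big(c_2\,a_{\mu\nu}+\tfrac{c_3}{2}\,b_\mu b_\nu\Big),\qquad L=\kappa\,(\sigma A+B^2).
\]
\end{itemize}
In this branch neither $\Phi'=0$ nor $\overset{\circ}{\nabla}b=0$ holds, so your expected contradiction does not occur.

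This branch is not empty. Take $a=-dt^2+\tanh^2t\,\delta_{ij}dx^idx^j$ and $b=-\tanh t\,dt$ on $t>0$, with $c_1=c_3=0$ and $c_2=1$. Then $\overset{\circ}{\nabla}b=\mathrm{sech}^2t\,a=(1+\langle b,b\rangle)\,a\neq0$. Moreover, $\Gamma^\mu{}_{\nu\rho}=\overset{\circ}{\Gamma}{}^\mu{}_{\nu\rho}+b^\mu a_{\nu\rho}$ is exactly the Levi-Civita connection of $a+b\otimes b=-\mathrm{sech}^2t\,dt^2+\tanh^2t\,\delta_{ij}dx^idx^j$. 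So this connection is metrized by the nondegenerate $(\alpha,\beta)$-metric $L=A(1+s)$, although $c_2\neq0$.

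Hence the ``only if'' direction of the statement is false as written, and no completion of your Step~3 can produce $c_2=0$. The paper obtains $c_2=0$ via Lemma~\ref{lem:rho1}. That lemma claims the numerator and denominator of \eqref{eq_rho} can share only an $x$-dependent factor $\rho_{\mu_0}(x)$, which overlooks $\dot x$-dependent common factors. In this branch both contain $B\,(c_2\dot x_{\mu_0}+\tfrac{c_3}{2}Bb_{\mu_0})$, and $D^\nu{}_\mu\dot x_\nu$ is not of the form $\rho_\mu(mA+qB^2)$.

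Carried through completely, your approach gives two things:
\begin{itemize}
\item the subbranch $c_2=0$ yields cases~1 and~2 of the statement (your integrations and the exclusions $\lambda=1$, $\tau=c_1/c_3$ are correct);
\item the subbranch $c_2\neq0$ yields the additional pseudo-Riemannian case $c_1=0$ displayed above.
\end{itemize}
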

\noindent Here are some remarks on the above result. 

First, the above result is consistent both with the classification of (positive definite) $(\alpha,\beta)$- Finsler metrics of Berwald type known from \cite{shen_unicorns} and for the one (in Lorentzian signature), of generalized m-Kropina metrics of Berwald type reported in \cite{Heefer:2022sgt, Pfeifer:2019tyy, Fuster:2020upk}.

Second, we note that,  as long as the one-form  $b$  satisfies the constraints of type  \eqref{eq:constraints_b_alpha_beta1}-\eqref{eq:constraints_b_alpha_beta2} (in particular, it is torse-forming \cite{gherici2026torseformingvectorfieldcertain} and closed), Weyl and completely symmetric connections are  $(\alpha,\beta)$-metrizable, while Schrödinger connections are never $(\alpha,\beta)$-metrizable, as can be seen by comparing the results of theorem \ref{thm:metrizability_classification_alpha_beta} with table~\ref{table:constraints}. This problem will be solved by passing to generalized $(\alpha,\beta)$-metrics (see the next section \ref{sec:generalized_ab}).

Finally, the request that the 1-form $b$ should not be absolutely parallel with respect to the Levi-Civita connection of $a$ is justified as, assuming the contrary, we would get that all $(\alpha,\beta)$-metrics $L$ constructed from these would be affinely equivalent to $a$ -- hence to the impossibility of metrizing via $L$ any connection with nonzero distortion (indeed, if $\overset{\circ }{\nabla}b =0$, we would get $\overset{\circ }{\delta }_{\mu }B =0$, which, together with $\overset{\circ }{\delta }_{\mu }A =0$, leads to $\overset{\circ }{\delta }_{\mu }L =0$, meaning that $L$ actually metrizes $\overset{\circ }{\nabla}$).

\bigskip

The proof of Theorem \ref{thm:metrizability_classification_alpha_beta} relies on several lemmas, which we state below.
The first lemma is obtained by using the $(\alpha,\beta)$-metric ansatz in the PDE system  \eqref{def:Berwald_conditions} (for the moment, for an \emph{arbitrary } symmetric affine connection $\nabla$).
\begin{lemma}
\label{lem:metrizability_(alpha,beta)_D} The Finsler $\left( \alpha ,\beta
\right) $-metric $L=A\Phi $, with $\Phi=\Phi(s)$, metrizes the symmetric affine connection $\nabla
=\overset{\circ }{\nabla}+~D \in Conn(M)$ if and only if $\Phi $ and $B$ solve the system
\begin{equation}
\Phi ^{\prime }B\left( A\overset{\circ }{\delta }_{\mu }B-AD_{~\mu }^{\nu
}b_{\nu }+BD_{~\mu }^{\nu }\dot{x}_{\nu }\right) =\Phi AD_{~\mu }^{\nu }\dot{%
x}_{\nu }\, ,~\ \ \forall \mu =0,..,3 \, .\   \label{eq:Phi_polyn}
\end{equation}
\end{lemma}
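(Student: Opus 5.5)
We use the Muzsnay characterization recalled above: $L$ metrizes $\nabla$ if and only if $\delta_\mu L=0$ for all $\mu$, where by \eqref{eq:CoeffsDnumu} $\delta_\mu=\overset{\circ}{\delta}_\mu-D^\nu{}_\mu\dot\partial_\nu$. The plan is to substitute $L=A\Phi(s)$ with $s=B^2/A$ and compute both pieces of $\delta_\mu L$ with the chain rule. Positive 2-homogeneity is automatic from the ansatz, and nondegeneracy is part of the definition of an $(\alpha,\beta)$-metric. So the equivalence reduces to the identity $\delta_\mu L=0 \iff$ \eqref{eq:Phi_polyn}.

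\textbf{Levi-Civita part.} Since $A$ is the pseudo-Riemannian Lagrangian of $a$, we have $\overset{\circ}{\delta}_\mu A=0$. Then
\[
\overset{\circ}{\delta}_\mu s=\frac{2B}{A}\,\overset{\circ}{\delta}_\mu B ,
\]
and therefore
\[
\overset{\circ}{\delta}_\mu L=A\Phi'\,\overset{\circ}{\delta}_\mu s=2\Phi' B\,\overset{\circ}{\delta}_\mu B .
\]

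\textbf{Vertical part.} Using $\dot\partial_\nu A=2\dot x_\nu$ and $\dot\partial_\nu B=b_\nu$, we get
\[
\dot\partial_\nu s=\frac{2Bb_\nu}{A}-\frac{2B^2\dot x_\nu}{A^2} .
\]
Hence
\[
\dot\partial_\nu L=2\dot x_\nu\Phi+A\Phi'\dot\partial_\nu s=2\Phi\dot x_\nu+2\Phi'\Bigl(Bb_\nu-\frac{B^2}{A}\dot x_\nu\Bigr).
\]
Contracting with $D^\nu{}_\mu$ gives
\[
D^\nu{}_\mu\dot\partial_\nu L=2\Phi D^\nu{}_\mu\dot x_\nu+2\Phi'B\,D^\nu{}_\mu b_\nu-\frac{2\Phi'B^2}{A}D^\nu{}_\mu\dot x_\nu .
\]

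\textbf{Assembling.} Setting $\overset{\circ}{\delta}_\mu L-D^\nu{}_\mu\dot\partial_\nu L=0$, dividing by 2 and multiplying by $A$ (nonzero on the admissible domain where $s$ is defined), we obtain
\[
\Phi'B\bigl(A\overset{\circ}{\delta}_\mu B-AD^\nu{}_\mu b_\nu+BD^\nu{}_\mu\dot x_\nu\bigr)=\Phi A D^\nu{}_\mu\dot x_\nu ,
\]
which is exactly \eqref{eq:Phi_polyn}. Every step is reversible, so the equivalence follows.

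No real obstacle is expected. The only care needed is the sign bookkeeping in $\dot\partial_\nu s$ and noting that $A\neq0$ on the domain where $s$ is defined.
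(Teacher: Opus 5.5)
Your proposal is correct and follows the paper's own proof essentially line by line. You use $\overset{\circ}{\delta}_\mu A=0$, the same chain-rule expressions for $\overset{\circ}{\delta}_\mu s$ and $\dot\partial_\nu s$, and then assemble $\delta_\mu L=\overset{\circ}{\delta}_\mu L-D^\nu{}_\mu\dot\partial_\nu L$, multiplied by $A/2$, relying on the Muzsnay characterization $\delta_\mu L=0$.
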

\begin{proof} 
We note that
\begin{equation}
\overset{\circ }{\delta }_{\mu }s=2\dfrac{B}{A}\overset{\circ }{\delta }
_{\mu }B,~\ \ \dot{\partial}_{\mu} s=2B\dfrac{b_{\mu }A-B\dot{x}_{\mu }}{A^{2}},
\label{eq:derivs_s}
\end{equation}
where, in the first relation, we have used that $\overset{\circ }{\delta }
_{\mu }A=0$. Together with $L=A\Phi \left( s\right) ,$ this leads to
\begin{equation}
\dot{\partial}_{\mu} L=2\dot{x}_{\mu }\Phi +A\Phi ^{\prime }\dot{\partial}_{\mu}s=2\dot{x}
_{\mu }\Phi +2\Phi ^{\prime }\dfrac{Bb_{\mu }A-B^{2}\dot{x}_{\mu }}{A}.
\end{equation}
and subsequently, to
\begin{equation}
\delta _{\mu }L=\overset{\circ }{\delta }_{\mu }L-D_{~\mu }^{\nu }\dot{\partial}_{\nu} L=\dfrac{2}{A}\left[ B\Phi ^{\prime }\left( A\overset{\circ }{\delta }
_{\mu }B-D_{~\mu }^{\nu }\left( b_{\nu }A-B\dot{x}_{\nu }\right) \right)
-A\Phi D_{~\mu }^{\nu }\dot{x}_{\nu }\right] ,
\end{equation}
which immediately gives the claimed result as $\delta_\mu L = 0$ is equivalent to the fact that the Finsler function $L$ metrizes the connection.
\end{proof}
Here are some immediate remarks on the necessary equations \eqref{eq:Phi_polyn}:
\begin{enumerate}
\item If the nonmetricity $Q$ of $\nabla$ is nonzero, then there exists at
least one index $\mu_0 \in \left\{ 0,1,2,3\right\} $ such that both the left and right-hand sides of the $\mu=\mu_0$-th equation \eqref{eq:Phi_polyn} are not identically zero. 

Indeed, assuming the contrary and taking into account that, none of $\Phi, \Phi',A$ and $B$  can vanish identically\footnote{The vanishing of $\Phi'$ would entail that $L = c A$ and thus $L$ would not metrize the connection.}, this implies that, for all $\mu$, we would have $D_{~\mu }^{\nu }\dot{x}_{\nu }=0,~\ \overset{\circ }{\delta }_{\mu
}B=D_{~\mu }^{\nu }b_{\nu }\,$. But then,  using the explicit expression for the coefficients from equation \eqref{eq:CoeffsDnumu} and differentiating the first relation with respect to $\dot{x}^{\rho}$ and $\dot{x}^{\tau}$, together with the definition of the nonmetricity tensor, we would obtain $D_{\rho \mu \tau }+D_{\tau \mu \rho }=Q_{\mu \tau \rho }=0 \,$,  in contradiction with our initial assumption $Q \neq 0$. 
\item In general, the nondegeneracy of the Riemannian metric $a$ implies that the polynomial in $\dot{x}$ expressions $A$ and $B$ cannot have any common factors. This fact will now play an important role to solve \eqref{eq:Phi_polyn}.
\end{enumerate}

Using these remarks, we obtain a second lemma.

\begin{lemma}\label{lem:rho1} 
If the $\left( \alpha ,\beta \right) $-metric function $L=A\Phi$, with $\Phi=\Phi(s)$, metrizes the symmetric affine connection $\nabla =\overset{\circ }{\nabla }+D$ on $M,$ then there exists a
nonvanishing one-form $\rho =\rho _{\mu }(x)dx^{\mu }$ on $M$ and constants $m,n,q\in \mathbb{R}$ with $m^{2}+q^{2}\not=0$ and $%
n^{2}+q^{2}\not=0$ such that, in any local chart we have
\begin{equation}
\left\{ 
\begin{array}{l}
D_{~\mu }^{\nu }\dot{x}_{\nu }=\rho _{\mu }\left( mA+qB^{2}\right) \\ 
\overset{\circ }{\delta }_{\mu }B-D_{~\mu }^{\nu }b_{\nu }=\rho _{\mu
}B\left( n-m\right)%
\end{array}%
\right. .  \label{eq:D_AB}
\end{equation}
\end{lemma}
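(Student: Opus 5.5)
Throughout we write $P_\mu := D^{\nu}{}_{\mu}\dot x_\nu$ and $R_\mu := \overset{\circ}{\delta}_\mu B - D^{\nu}{}_{\mu} b_\nu$. The plan is to read equation \eqref{eq:Phi_polyn} of Lemma \ref{lem:metrizability_(alpha,beta)_D} as an identity between polynomials and rational functions in $\dot x$, and then to use the separation of variables forced by the transcendental function $\Phi(s)$. Both $P_\mu$ and $R_\mu$ are quadratic in $\dot x$. With this notation \eqref{eq:Phi_polyn} reads
\begin{equation*}
\Phi'(s)\,B\,(A R_\mu + B P_\mu) = \Phi(s)\, A\, P_\mu .
\end{equation*}
Dividing by $A^2$, which is possible on $\mathcal{A}$ away from the null cone, gives
\begin{equation*}
\Phi'\,\frac{B}{A}\Big(R_\mu + s\,\frac{P_\mu}{B}\Big) = \Phi\, \frac{P_\mu}{A}.
\end{equation*}
By the first remark after Lemma \ref{lem:metrizability_(alpha,beta)_D}, some index $\mu_0$ has both sides nonzero. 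For such $\mu_0$ we may write $\Phi'/\Phi = P_{\mu_0}/\big(B(A R_{\mu_0}+BP_{\mu_0})\big)$. The left-hand side depends on $\dot x$ only through $s$. The right-hand side is a rational function of degree $-2$ homogeneity (numerator quadratic, denominator quartic), so after multiplying by $B^2/s$ it becomes $0$-homogeneous. Hence $P_{\mu_0}/\big(B(AR_{\mu_0}+BP_{\mu_0})\big)= F(s)/A$ for some function $F$.

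The key step is to show that this forces $P_\mu$ to lie in $\mathrm{span}\{A,B^2\}$ and $R_\mu$ in $\mathrm{span}\{B\}$. I would argue as follows. Consider the quadratic polynomial $P_{\mu}$ and the quadratic polynomial $AR_\mu+BP_\mu$ divided by $B$. Write the equation as $A\,P_\mu = F(s)\,B\,(AR_\mu+BP_\mu)$ with $F$ a function of $s$ only. Pointwise in $x$, restrict to the quadric level sets $s=\text{const}$. Since $A$ and $B$ share no common factor (the second remark, from the nondegeneracy of $a$), unique factorization in the polynomial ring and a degree count give the following. The ratio $AP_\mu/\big(B(AR_\mu+BP_\mu)\big)$ can be a function of $B^2/A$ only if, after cancellation, numerator and denominator are polynomials in $A$ and $B$ of matching weighted degree. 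In practice this means $B$ must divide $R_\mu$, and $P_\mu$ must be a combination of $A$ and $B^2$. Concretely, I would set $P_\mu = \alpha_\mu A + \beta_\mu B^2 + \widetilde P_\mu$ and $R_\mu = \gamma_\mu B + \widetilde R_\mu$, with remainders not in these spans, and show that any nonzero remainder produces $\dot x$-dependence transversal to the level sets of $s$. This is the main obstacle. I expect to handle it by evaluating on two vectors with the same $s$ but different remainder values, which exist since $\widetilde P_\mu$ is not a function of $(A,B)$. A cleaner route is to differentiate along vector fields tangent to the level sets of both $A$ and $B$ (e.g.\ $\xi$ with $\xi^\nu\dot x_\nu=0=\xi^\nu b_\nu$): these annihilate $s$, so they must annihilate the rational expression, which forces the remainders to be zero.

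Once $P_\mu=\alpha_\mu A+\beta_\mu B^2$ and $R_\mu=\gamma_\mu B$ for all $\mu$, substituting back gives
\begin{equation*}
\Phi'(s)\,s\,\big(\gamma_\mu+\alpha_\mu+\beta_\mu s\big)=\Phi\,(\alpha_\mu+\beta_\mu s).
\end{equation*}
Because $\Phi$ is a single fixed function, the triples $(\alpha_\mu,\beta_\mu,\gamma_\mu)$ must be proportional across $\mu$ wherever nonzero, and the proportionality factors must be $x$-independent. Otherwise two independent linear relations would force $\Phi'/\Phi$ into two different rational forms, and hence force $\Phi'\equiv0$ or $\Phi\equiv0$, both excluded. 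So $(\alpha_\mu,\beta_\mu,\gamma_\mu)=\rho_\mu(m,q,n-m)$ with constants $m,n,q$ and a one-form $\rho$. It is nonvanishing because $Q\neq0$. The inequality $m^2+q^2\neq0$ follows from $P_{\mu_0}\neq0$, and $n^2+q^2\neq0$ follows from nonvanishing of $B(AR+BP)$, i.e.\ $A(n-m)+mA+qB^2=nA+qB^2\not\equiv0$. Finally, constancy of $m,n,q$ in $x$ and connectedness of $M$ follow because the ODE for $\Phi$ cannot depend on $x$, up to rescaling absorbed in $\rho$.
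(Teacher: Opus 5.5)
Your route is essentially the paper's: isolate $\Phi'/\Phi$ at an index $\mu_0$, then use the coprimality of $A$ and $B$ to argue that the quartic numerator and denominator are built from $A$ and $B^2$. The gap is exactly this key step, and it is not just unproven but false as stated.

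A ratio of polynomials can be a function of $s$ because numerator and denominator share a factor that cancels. Coprimality of $A$ and $B$ says nothing about such factors. You write ``after cancellation'' but then draw conclusions about $P_\mu$ and $R_\mu$ themselves. Concretely, \eqref{eq:Phi_polyn} is equivalent to $(\Phi-s\Phi')P_\mu=\Phi' B R_\mu$. If $\Phi=\kappa(1+cs)$, i.e.\ $L=\kappa(A+cB^2)$, this reads $P_\mu=cBR_\mu$. Then $AP_\mu/\big(B(AR_\mu+BP_\mu)\big)=c/(1+cs)$ for an \emph{arbitrary} linear form $R_\mu$.

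Your test with fields $\xi$ tangent to the common level sets of $A$ and $B$ cannot rule this out. Such $\xi$ annihilate every function of $s$, so the test merely restates the hypothesis. The case genuinely occurs. Let $\nabla$ be the Levi-Civita connection of $a+c\,b\otimes b$, with $b=df$ and $1+c\langle b,b\rangle\neq0$. Then $D^\sigma{}_{\mu\rho}=\frac{c}{1+c\langle b,b\rangle}\,b^\sigma\,\overset{\circ}{\nabla}_\mu b_\rho$ and $Q\neq0$ in general. Moreover $R_\mu=\frac{1}{1+c\langle b,b\rangle}\,\overset{\circ}{\nabla}_\mu b_\rho\,\dot x^\rho$, which has the form $\rho_\mu B(n-m)$ only if $\overset{\circ}{\nabla}b\propto b\otimes b$. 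So the statement fails for quadratic $L$. The paper's proof makes the same unjustified claim (that a ratio equal to a function of $s$ must have numerator and denominator in the span of $A^2,AB^2,B^4$). You have therefore reproduced its gap rather than closed it.

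What can be proved is the lemma under the extra hypothesis $\Phi''\not\equiv0$. Set $G:=\Phi'/(\Phi-s\Phi')$, which is well defined since $\Phi-s\Phi'\neq0$ by nondegeneracy. It satisfies $G'=\Phi\Phi''/(\Phi-s\Phi')^2\not\equiv0$. Pick $s_1,s_2\notin\{0,\langle b,b\rangle\}$ with $G(s_1)\neq G(s_2)$. Then $B^2-s_iA$ is a nondegenerate, hence irreducible, quadratic form. The fixed form $P_\mu-G(s_i)BR_\mu$ vanishes on an open piece of its real zero set, so it equals $k_i(B^2-s_iA)$. Subtracting the two identities gives $BR_\mu\in\mathrm{span}\{A,B^2\}$, hence $R_\mu=\gamma_\mu B$ and $P_\mu\in\mathrm{span}\{A,B^2\}$.

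From there your last paragraph goes through, with one correction. Two independent relations would force $\Phi-s\Phi'\equiv0$, i.e.\ the degenerate $L=\kappa B^2$, rather than $\Phi'\equiv0$ or $\Phi\equiv0$.

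The affine case must be excluded or treated separately, and it is not vacuous even for vectorial nonmetricity. A direct check shows that $L=A+cB^2$ metrizes the connection with $c_1=0$ and arbitrary $c_2,c_3$ whenever $\overset{\circ}{\nabla}_\mu b_\nu=(c^{-1}+\langle b,b\rangle)\big(c_2a_{\mu\nu}+\tfrac{c_3}{2}b_\mu b_\nu\big)$. This includes $c_2\neq0$.

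Two minor slips: $R_\mu$ is linear, not quadratic, in $\dot x$, and your first formula for $\Phi'/\Phi$ drops the factor $A$.
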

\begin{proof}
Fix an arbitrary local chart on $M$. By the first remark,  there exists at least one index $\mu = \mu_0 \in \{0,1,2,3\}$ such that the $\mu_0$-th equation \eqref{eq:Phi_polyn} can be rewritten as
\begin{equation}
\dfrac{\Phi ^{\prime }}{\Phi }=\dfrac{AD_{~\mu_0 }^{\nu }\dot{x}_{\nu }}{%
B\left( A\overset{\circ }{\delta }_{\mu_0 }B-AD_{~\mu_0 }^{\nu }b_{\nu
}+BD_{~\mu_0 }^{\nu }\dot{x}_{\nu }\right) } \, .  \label{eq_rho}
\end{equation}
The left-hand side of the above $\mu_0$-th equation is a function of $s=\frac{B^2}{A}$ only. On the other hand, the right-hand side is a ratio of two homogeneous fourth degree polynomials in the vector coordinates $\dot x^{\mu}$; such a ratio can be equal to a function of $s$ only if both the numerator and denominator are linear combinations -- with coefficients depending solely on the coordinates of the point $x$ -- of $A^2,AB^2,B^4$. Since $A$ and $B$ share no common factors, the structure of equation \eqref{eq_rho} forces the numerator to be a linear combination of $A^2$ and $AB^2$ and the denominator to be a linear combination of $AB^2$ and $B^4$.\\ 
Furthermore, any separate $x$-dependence (not encoded in $s$) must be simplified in the right-hand side; that is, it can only appear through a common overall factor $\rho_{\mu_0}=\rho_{\mu_0}(x)$ in both the numerator and the denominator. Hence, there exist constants $m,l,n,q$ such that
\begin{equation} \label{syst_aux_Lemma2}
\left\{ 
\begin{array}{l}
D_{~\mu_0 }^{\nu }\dot{x}_{\nu }=\rho _{\mu_0 }\left( mA+lB^{2}\right) \\ 
A\overset{\circ }{\delta }_{\mu_0 }B-AD_{~\mu_0 }^{\nu }b_{\nu }+BD_{~\mu_0 }^{\nu
}\dot{x}_{\nu }=B\rho _{\mu_0 }\left( nA+qB^{2}\right)%
\end{array}
\right. \, 
\end{equation}
and consequently:
\begin{equation} \label{eq:diffeq_Phi_alpha_beta_rough}
    \frac{\Phi '}{\Phi}=\frac{A(mA+lB^2)}{B^2(nA+qB^2)} \,.
\end{equation}
Observe that, in case when the bracket on the left-hand side of \eqref{eq:Phi_polyn} vanishes, still \eqref{syst_aux_Lemma2} holds, with $\rho_{\mu_0}=0$.

Substituting the expression of $D^{\nu}{}_{\mu_0} \dot x_{\nu}$ from the first equation of \eqref{syst_aux_Lemma2} into the second one, we obtain
\begin{equation}\label{eq:constraint}
A\left( \overset{\circ }{\delta }_{\mu_0 }B-D_{~\mu_0 }^{\nu }b_{\nu }\right)
=\rho _{\mu_0 }B\left[ B^{2}\left( q-l\right) +A\left( n-m\right) \right] \, .
\end{equation}
Since $A$ and $B$ have no common factors, equation ~\eqref{eq:constraint} immediately implies $l = q$, which proves \eqref{eq:D_AB}. Moreover, as $A,B$ are coordinate-invariant and $D^{\nu}_{~\mu_0}$ are components of a globally well-defined tensor field on $M$, it follows immediately that $\rho_{\mu_0}$ must be the components of a globally well defined one-form.
Finally, the pairs $(m,q)$ and $(n,q)$ cannot simultaneously vanish, as this would force the corresponding left-- and right-hand sides of equation \eqref{eq:Phi_polyn} to vanish identically for all $\mu_0$, which is excluded. Therefore, we must have $m^2+q^2 \neq 0$ and $n^2 +q^2 \neq 0$.
\end{proof}
From the proof of the above lemma, it turns out that $\Phi$ is determined by a simple, integrable first-order relation
\begin{equation}
\dfrac{\Phi ^{\prime }}{\Phi }=\dfrac{m+qs}{s(n+qs)} \, .  \label{eq:Phi_AB}
\end{equation}
Yet,  let us recall that our original PDE system, which is equivalent to (\ref{eq:D_AB}), is overdetermined, hence will impose consistency conditions which need to be studied before jumping to integration. To study these consistency conditions, we will pass to the case of vectorial nonmetricity and use the explicit form \eqref{eq:D^mu_nu} of the distortion. This leads to a third lemma.

\begin{lemma}\label{prop:sufficientcondalphabeta}
     If an $(\alpha,\beta)$ metric $L=A \Phi$, where $\Phi=\Phi(s)$, metrizes a connection with vectorial nonmetricity, there exists a nonvanishing one-form $\rho=\rho_{\mu}(x) dx^{\mu}$ on $M$, and constants $m,n,q \in \mathbb{R}$ with $m^2+q^2 \neq 0$ and $n^2+q^2 \neq 0$, such that these are related to the coefficients $c_1,c_2,c_3$ and the one-form $b_{\mu}$ of the vectorial nonmetricity by
\begin{equation} \label{eqs:Lemma3}
c_{2} =0, \; \; c_{3}m=c_{1}q, \; \; \rho _{\mu }q =\frac{c_{3}}{2}b_{\mu }, \; \;  \rho _{\mu }m=\dfrac{c_{1}}{2}
b_{\mu }, ~\ \\
\overset{\circ}{\nabla}_{\mu}b_{\nu} =n\rho _{\mu }b_{\nu }+\dfrac{1}{2}\left( c_{1}+\left\langle
b,b\right\rangle c_{3}\right) b_{\mu }b_{\nu }-\dfrac{1}{2}\left\langle
b,b\right\rangle c_{1}a_{\mu \nu } \, .
\end{equation}
\end{lemma}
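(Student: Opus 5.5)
The plan is to feed the explicit vectorial-nonmetricity expressions \eqref{eq:D_dot_x} and \eqref{eq:D_b} into the two relations \eqref{eq:D_AB} supplied by Lemma~\ref{lem:rho1}. Both relations are then polynomial identities in $\dot x$ with $x$-dependent coefficients, and the constraints \eqref{eqs:Lemma3} should come out of comparing coefficients. Lemma~\ref{lem:rho1} already provides the one-form $\rho$ and the constants $m,n,q$ with $m^2+q^2\neq 0$, $n^2+q^2\neq0$. So what remains is to derive the algebraic relations and the expression for $\overset{\circ}{\nabla}_\mu b_\nu$.

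\emph{Step 1 (the first relation).} Inserting \eqref{eq:D_dot_x} into the first line of \eqref{eq:D_AB} gives
\begin{equation*}
c_2 B\dot x_\mu+\tfrac12\left(c_3B^2+c_1A\right)b_\mu=\rho_\mu\left(mA+qB^2\right).
\end{equation*}
Both sides are quadratic in $\dot x$, so I would apply $\dot\partial_\rho\dot\partial_\sigma$ and obtain the tensor identity
\begin{equation*}
c_2\left(b_\rho a_{\mu\sigma}+b_\sigma a_{\mu\rho}\right)+c_3 b_\mu b_\rho b_\sigma+c_1 a_{\rho\sigma}b_\mu=2\rho_\mu\left(m a_{\rho\sigma}+q b_\rho b_\sigma\right).
\end{equation*}
Every term except the $c_2$-term has the form (covector in $\mu$)$\otimes$(symmetric tensor in $\rho\sigma$ lying in the span of $a_{\rho\sigma}$ and $b_\rho b_\sigma$). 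The $c_2$-term is not of this form when $b\neq0$ and $\dim M=4$. To see this, fix $\mu$ and let $\rho,\sigma$ range over vectors $a$-orthogonal to $b$ and to the $\mu$-direction: the $c_2$-term then vanishes, while the other terms reduce to multiples of $a_{\rho\sigma}$. Choosing instead $\rho$ along $\mu$ and $\sigma$ along $b$ isolates $c_2$.

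This independence argument is the step I expect to need the most care, because $b$ may be null with respect to the Lorentzian metric $a$. My first attempt will be to contract the identity with suitable vectors, for instance a vector $v$ with $b(v)=0$, $a(v,v)\neq 0$, and a second vector $w$ with $b(w)\neq0$, $a(v,w)=0$. The contractions should be chosen so that the $a_{\rho\sigma}$ and $b_\rho b_\sigma$ pieces drop out and only $c_2$ survives. This forces $c_2=0$.

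\emph{Step 2 (reading off $\rho$).} With $c_2=0$, I compare the coefficients of $a_{\rho\sigma}$ and of $b_\rho b_\sigma$; these are linearly independent because $a$ has rank $4>1$. This gives $\rho_\mu m=\tfrac{c_1}{2}b_\mu$ and $\rho_\mu q=\tfrac{c_3}{2}b_\mu$. Multiplying the first by $q$ and the second by $m$ and subtracting yields $(c_3m-c_1q)\,b_\mu=0$. Since $b\neq0$, this gives $c_3m=c_1q$.

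\emph{Step 3 (the second relation).} I use $\overset{\circ}{\delta}_\mu B=(\overset{\circ}{\nabla}_\mu b_\nu)\dot x^\nu$, which follows directly from the definition of $\overset{\circ}{\delta}_\mu$. Together with \eqref{eq:D_b} and $c_2=0$, the second line of \eqref{eq:D_AB} becomes
\begin{equation*}
(\overset{\circ}{\nabla}_\mu b_\nu)\dot x^\nu=-\tfrac{c_1}{2}\langle b,b\rangle\dot x_\mu+\left(c_1+\tfrac{c_3}{2}\langle b,b\rangle\right)Bb_\mu+(n-m)\rho_\mu B.
\end{equation*}
This identity is linear in $\dot x$, so applying $\dot\partial_\nu$ gives $\overset{\circ}{\nabla}_\mu b_\nu$. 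Finally, substituting $m\rho_\mu=\tfrac{c_1}{2}b_\mu$ into the term $-m\rho_\mu b_\nu$ lowers the coefficient of $b_\mu b_\nu$ from $c_1+\tfrac{c_3}{2}\langle b,b\rangle$ to $\tfrac12\left(c_1+c_3\langle b,b\rangle\right)$. This reproduces exactly the last equation in \eqref{eqs:Lemma3}, which completes the argument.
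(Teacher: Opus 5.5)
Your proposal is correct and follows the paper's proof. Both substitute \eqref{eq:D_dot_x}--\eqref{eq:D_b} into the two relations of Lemma~\ref{lem:rho1}, extract $c_2=0$, $\rho_\mu m=\tfrac{c_1}{2}b_\mu$, $\rho_\mu q=\tfrac{c_3}{2}b_\mu$ and $c_3m=c_1q$ from the first relation, and then differentiate the second in $\dot x^\nu$. The only difference is how you isolate $c_2$. The paper contracts the first relation with $\dot x^\mu$ and with $b^\mu$ and compares the $B^2$-coefficients, which needs no care about null $b$. Your contraction of the tensor identity with $v^\mu v^\rho w^\sigma$ also works: it gives $c_2\,b(w)\,a(v,v)=0$, and you should state why such $v,w$ exist. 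A suitable $v$ exists because the 3-dimensional $\ker b$ cannot be totally $a$-isotropic. A suitable $w$ exists because $v^{\perp}=\ker b$ would force $b\propto a(v,\cdot)$ and hence $b(v)\neq 0$.
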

The fact that the covariant derivative of $b$ needs to split into a term proportional to $a_{\mu\nu}$ and a term proportional to $b_\mu$ is also known as $b$ being \emph{torse forming}.

\begin{proof}
Assume that the $L=A \Phi \left(\frac{B^2}{A} \right)$  metrizes the connection $\nabla = \overset{\circ}{\nabla}+D$ with vectorial nonmetricity \eqref{eq:vectorialnonmetricity} and substitute the explicit expressions of $D^{\nu}{}_{\mu} \dot x_{\nu}$ and $D^{\nu}{}_{\mu} b_{\nu}$ from equations \eqref{eq:D_dot_x}-\eqref{eq:D_b} into the system \eqref{eq:D_AB}. A direct computation shows that this becomes
\begin{equation}
\left\{ 
\begin{array}{l}
c_{2}B\dot{x}_{\mu }+\dfrac{1}{2}\left( c_{3}B^{2}+c_{1}A\right) b_{\mu
}=\rho _{\mu }\left( mA+qB^{2}\right) \\ 
\overset{\circ }{\delta }_{\mu }B-\left[ \left\langle b,b\right\rangle
\left( c_{2}-\dfrac{1}{2}c_{1}\right) \dot{x}_{\mu }+\left(
c_{1}+\left\langle b,b\right\rangle \dfrac{c_{3}}{2}\right) Bb_{\mu }\right]
=\rho _{\mu }B\left( n-m\right)
\end{array}
\right. .  \label{eq:D_AB_vectorial}
\end{equation}
The first equation of \eqref{eq:D_AB_vectorial} will provide the claimed algebraic constraints; then substituting the obtained constraints into the second equation will lead to the differential constraints on $b_{\nu}$.  
\begin{enumerate}
    \item[\textit{Step 1.}] \textit{ Algebraic constraints:}
    \begin{enumerate}
        \item[$1$.] Contracting the  first equation \eqref{eq:D_AB_vectorial}  with $\dot{x}^{\mu}$ leads to
    \begin{equation}
\left( \dfrac{1}{2}c_{1}+c_{2}\right) AB+\dfrac{c_{3}}{2}B^{3}=\left( \rho
_{\mu }\dot{x}^{\mu }\right) \left( mA+qB^{2}\right) \,.
\end{equation}
Taking, again, into account that $A$ and $B$ (regarded as polynomial expressions in $\dot{x}$) have no common factors, we find
\begin{equation}
\left( \rho _{\mu }\dot{x}^{\mu }\right) q=\dfrac{c_{3}}{2}B,~\ \left( \rho
_{\mu }\dot{x}^{\mu }\right) m=\left( \dfrac{1}{2}c_{1}+c_{2}\right) B.
\end{equation}
Differentiation with respect to $\dot{x}^{\mu }$ then implies
\begin{equation}
\rho _{\mu }q=\dfrac{c_{3}}{2}b_{\mu }, \; \; 
\rho _{\mu }m=\left( \dfrac{1}{2}c_{1}+c_{2}\right) b_{\mu } \,  ,\label{eq:first_constraints_rho}
\end{equation}
which upon contraction with $b^{\mu}$ reveals that
\begin{equation} \label{eq:compareconstraint1}
\left( \rho _{\mu }b^{\mu }\right) q=\dfrac{c_{3}}{2}\left\langle
b,b\right\rangle, \; \;
\left( \rho _{\mu }b^{\mu }\right) m=\left( \dfrac{1}{2}c_{1}+c_{2}\right)
\left\langle b,b\right\rangle \, .
\end{equation}
\item[$2.$] Contracting the first equation of \eqref{eq:D_AB_vectorial} with $b^\mu$  and separating coefficients of $A$ and $B^2$ provides two additional relations
    \begin{equation}\label{eq:compareconstraint2}
            \frac{1}{2} c_1 \langle b, b \rangle = (\rho_\mu b^\mu) m, \qquad
            c_2 + \frac{c_3}{2} \langle b, b \rangle = (\rho_\mu b^\mu) q \,.
        \end{equation}
 \end{enumerate}
Comparing the two sets of constraints  \eqref{eq:compareconstraint1}-\eqref{eq:compareconstraint2} immediately gives the necessary consistency condition
\begin{equation}
c_{2}=0 \, .  \label{eq_c2=0}
\end{equation}
 The latter, substituted into \eqref{eq:first_constraints_rho}, produces the additional relation
\begin{equation}
c_{3}m=c_{1}q \, .  \label{eq:cmq_consistency}
\end{equation}
Summarizing, we have obtained the claimed equalities
\begin{equation}\label{eq:algebraicconstraintsfinal}
  c_2=0, \; \; c_3m=c_1q, \; \; \rho_{\mu} q=\frac{c_3}{2} b_{\mu}, \; \; \rho_{\mu}m=\frac{c_1}{2} b_{\mu} \, .
\end{equation}
    \item[\textit{Step 2.}] \textit{ Differential constraints on $b$.} Plugging the algebraic constraints into the second equation of \eqref{eq:D_AB_vectorial} simplifies it to
\begin{equation}
\overset{\circ}{\delta}_\mu B = n B \rho_\mu + \frac{1}{2} B (c_1 + \langle b,b \rangle c_3) b_\mu - \frac{1}{2} \langle b,b \rangle c_1 \dot{x}_\mu \, ,
\end{equation}
which upon differentiation with respect to $\dot{x}^\nu$ yields the last equation \eqref{eq:D_AB_vectorial}.
\end{enumerate}
\end{proof}

Here is an interesting side remark. Lemma \ref{prop:sufficientcondalphabeta} implies that
\begin{equation}
\rho_{\mu} = \mathrm{const}\cdot b_{\mu} \, ,
\end{equation}
which in turn implies the symmetry  $\overset{\circ}{\nabla}_{\nu} b_{\mu} - \overset{\circ}{\nabla}_{\mu} b_{\nu}=0$. Hence, the one-form $b$ must be closed, 
\begin{equation}
    \mathrm{d}b=0.
\end{equation}

We are now ready to prove the main statement of this section.

\begin{proof} \textit{of Theorem} \ref{thm:metrizability_classification_alpha_beta}. \\
\textit{Necessity}: By Lemma \ref{prop:sufficientcondalphabeta}, a necessary condition for a connection with vectorial nonmetricity to be pseudo-Finsler-metrizable is the existence of a nonvanishing one-form $\rho=\rho_{\mu}(x) dx^{\mu}$ and of the constants $m,n,q \in \mathbb{R}$ with $m^2+q^2 \neq 0$ and $n^2+q^2 \neq 0$, such that equations \eqref{eqs:Lemma3} hold. Noting that in any case, $c_2=0$, we distinguish two cases:
\begin{enumerate}
\item If  $c_3=0$, then, the second equation  of \eqref{eqs:Lemma3}  implies $q=0$. Further, 
from $m^2+q^2 \neq 0$ and $n^2+q^2 \neq 0$, we find that neither $m$, nor $n$ can vanish, that is, we can write:
    \begin{equation}
        \rho_{\mu}=\frac{c_1}{2m} b_{\mu} \, .
    \end{equation}
 Substituting the above value into the last equation of \eqref{eqs:Lemma3}, we get the differential constraint \eqref{eq:constraints_b_alpha_beta1} with
\begin{equation}
\lambda = \frac{m}{n} \neq 0 \; .
\end{equation}
    \item If $c_3\neq 0$, then, the third equation of \eqref{eqs:Lemma3} ensures that $q \neq 0$ (otherwise, we would get that $b$ identically vanishes, which is excluded by hypothesis). Denoting:
\begin{equation}
\Lambda:=\frac{m}{q}=\frac{c_1}{c_3},\qquad \tau:=\frac{n}{q} \,,
\end{equation}
the differential constraint in \eqref{eqs:Lemma3} becomes precisely \eqref{eq:constraints_b_alpha_beta2}.
\end{enumerate}

Moreover, we have shown that the pseudo-Finsler structure $L=A\Phi$, if any, must necessarily be obtained by  integrating the differential equation
\begin{equation}\label{eq:Phi}
    \frac{\Phi'}{\Phi}=\frac{m+qs}{s(n+qs)} \, .
\end{equation}
\textit{Sufficiency:} Assuming that all the above necessary conditions are satisfied, we will prove that a pseudo-Finsler structure metrizing $\nabla$ exists. To this aim, we integrate equation \eqref{eq:Phi}, as follows. 
\begin{enumerate}
     \item If  $c_3=0$, hence $q=0$, using the above notations, our equation reduces to $\frac{\Phi'}{\Phi}= \frac{\lambda}{s} \,$,
    thus leading to the  power-law metric
   \begin{equation}\label{eq:powl}
        \Phi(s)= \kappa s^{\lambda} \, \Rightarrow L=\kappa A s^{\lambda}, \; \; \kappa \in \mathbb{R}^{*} \, .
    \end{equation}
    \item  If $c_3\neq 0$, hence, $q \neq 0$, then equation \eqref{eq:Phi} takes the form
\begin{equation}
\dfrac{\Phi ^{\prime }}{\Phi }=\dfrac{\Lambda +s}{s(\tau +s)} \, .
\end{equation}
Integrating gives three distinct families of solutions:
\begin{enumerate}
   \item[$(i)$] If $\Lambda,\tau \neq 0$, or equivalently, $c_1,\tau \neq 0$, then $\Phi$ is of generalized $m$-Kropina type ($\Lambda, \tau \neq 0$)
\begin{equation}\label{eq:Phisol_AB_2}
\Phi =\kappa s^{\tfrac{\Lambda }{\tau }}\left( s+\tau \right) ^{1-\tfrac{%
\Lambda }{\tau }} \Rightarrow  L=\kappa A s^{\tfrac{\Lambda}{\tau}}(s+\tau)^{1-\tfrac{\Lambda}{\tau}}, \; \;  \kappa \in \mathbb{R}^{\ast } \,.
\end{equation}
\item[$(ii)$] If $ \Lambda=0, \tau \neq 0$, or equivalently, $c_1=0, \tau \neq 0$, then $\Phi$ is Riemannian
\begin{equation}\label{eq:Phisol_AB_3}
\Phi =\kappa \left( \tau +s\right) \Rightarrow L=\kappa \left( \tau
A+B^{2}\right), \; \; \kappa \in \mathbb{R}^{\ast } \, .
\end{equation}
\item[$(iii)$] If $\Lambda \neq 0, \tau =0$, or equivalently, $c_1 \neq 0, \tau =0$, then $\Phi$ is of exponential type
\begin{equation}\label{eq:Phisol_AB_4}
\Phi =\kappa se^{-\tfrac{\Lambda }{s}} \Rightarrow L=\kappa B^2 e^{-\tfrac{\Lambda }{s}}, \; \; \kappa \in \mathbb{R}^{\ast}.
\end{equation}
\end{enumerate}
\end{enumerate} 
Direct computation confirms that all four solutions satisfy the metrizability conditions exactly. 
To check the nondegeneracy of the obtained solutions, we calculate $\det(g)$ in each case. Using  the identity (see \cite{Voicu:2023zem})

\begin{equation}\label{eq:detg}
\det(g) = \Phi^2 (\Phi - s\Phi') \frac{\partial}{\partial s}\left(  (s-\langle b,b \rangle)\frac{(\Phi - s\Phi')^2}{\Phi}  \right) \det(a)\, ,
\end{equation}
we find that:
\begin{itemize}
    \item[$\triangleright$]    For the power law metric, one needs to impose \eqref{eq:powl}: $\lambda \neq 1$;
    \item[$\triangleright$] The generalized $m$-Kropina type metric \eqref{eq:Phisol_AB_2} is nondegenerate whenever $\Lambda-\tau \neq 0$, that is $\tau \neq \frac{c_1}{c_3}$. 
    \item[$\triangleright$] The shifted Riemannian metric is always nondegenerate (indeed, degeneracy would imply $\tau + \langle b, b\rangle = 0,$ which, taking into account that, in this case, $c_1 = 0$, would lead to an identically vanishing Levi-Civita covariant derivative of $b$, which is forbidden by hypothesis).
    \item[$\triangleright$] The exponential type metric \eqref{eq:Phisol_AB_4} is always nondegenerate.
\end{itemize}
This completes the proof, and thus we found all possible solutions, i.e., the only $(\alpha,\beta)$-metrics that metrize connections with vectorial nonmetricity.
\end{proof}

Having shown that $(\alpha,\beta)$-Finsler structures can metrize some of the connections with vectorial nonmetricity, immediately the question arises, if we can do better and metrize all of these connections by more general Finsler structures. Therefore we investigate this question in the following.

\section{Generalized \texorpdfstring{$(\alpha,\beta)$}{generalizedalphabeta}-metrizability}
\label{sec:generalized_ab}
We will now find the most general pseudo-Finsler Lagrangians which metrize a given symmetric affine connection $\nabla$ with vectorial nonmetricity and which depend algebraically on the pseudo-Riemannian metric $a$ and on the one-form~$b$. The first subsection states the main result, which is proved in the following subsections.

\subsection{Main statement}
As the only independent scalar invariants one can algebraically construct from $a_{\mu\nu}$, $b_{\mu}$  and $\dot{x}^{\mu}$ alone are $ \langle b, b \rangle:=a^{\mu\nu}b_{\mu}b_{\nu},\, A$ and $B$,  the Finsler Lagrangians we are looking for must depend on these three quantities only. To disentangle the dependence of $L$ on the pseudo-norm $\langle b, b \rangle$ and on the direction of $b$, we will introduce the notations: 
\begin{equation} \label{defs:|b|_U}
   |b|=\sqrt{| \langle b, b \rangle |}, \; \; \; \; b_{\mu}=|b| u_{\mu}, \; \; \; \; U=u_{\mu} \dot x^{\mu}.
\end{equation}
Thus, the 2-homogeneity condition on $L$ entails: 
\begin{equation} \label{def:generalized_ab_metric}
    L=A \Phi \left( |b|, p \right),
\end{equation}
where:
\begin{equation}
   p=\frac{U^2}{A}.
\end{equation}
We note that, according to the above definition, $u=u_{\mu}dx^{\mu}$ is a normalized one-form on $M$, meaning that
\begin{equation}
u_{\mu}u^{\mu}=\epsilon, \; \; \text{where} \; \; \epsilon=\pm 1.
\end{equation}
In particular, this implies: $b_{\mu}b^{\mu}= \langle b, b\rangle=|b|^2 \epsilon$.

\bigskip

\textbf{Remark}.    We can safely assume in the following that $|b| \neq 0$  and, apart from possible isolated points,
    \begin{equation}\label{eq:phi_b}
        \Phi_{|b|}' \equiv \partial_{|b|}\Phi \neq 0.
    \end{equation}
Indeed, if any of these conditions fails, the generalized $(\alpha,\beta)$-metric \eqref{def:generalized_ab_metric} reduces to a standard $(\alpha,\beta)$-metric -- the case which has already been separately discussed in the previous Section \ref{sec:(alpha,beta)}.

In the following, we will find the consistency conditions and integrate the overdetermined PDE system \eqref{def:Berwald_conditions}, using, this time, \eqref{def:generalized_ab_metric} as our ansatz.

\bigskip

Here is our main result.
\begin{theorem} \label{thm:metrizability_general_ab}
  A connection $\nabla = \overset{\circ}{\nabla}+D$ with nonzero vectorial nonmetricity is pseudo-Finsler-metrizable by a generalized $(\alpha,\beta)$-metric $L=A\Phi(|b|,p)$ if and only if the following conditions are simultaneously satisfied:
  \begin{enumerate}
    \item $c_3=0$, or $c_1=c_2=0$.
    \item $c_3$ and $c_1$ do not simultaneously vanish.
    \item There hold the equalities
\begin{equation}\label{eq:db}
d|b| = \lambda u, \quad \overset{\circ}{\nabla}_\mu u_\nu=\tau \left(  a_{\mu\nu} - \epsilon u_\mu u_\nu \right), 
\end{equation}
where $\lambda = \lambda(|b|)$ is an arbitrary, nowhere zero smooth function and $\tau = \tau(|b|)$ is defined in terms of $\lambda$ as
\begin{equation}\label{eq:tau_general}
    \tau=\frac{ c_1 |b|}{ c_1 C_1 e^{\left(c_1 +2c_2 \right)\rho(|b|)} - 2 \epsilon},   \qquad  \text{with} \qquad     \rho(|b|)=\int \frac{|b|}{\lambda(|b|)} d|b|
\end{equation}
and  $C_1 \in  \mathbb{R}$ is an arbitrary constant satisfying $c_2(c_1 - 2c_2)C_1 =0$.

\end{enumerate}

If the above conditions are satisfied, then the solution is explicitly constructed as:
\begin{enumerate}
    \item[$(i)$] If $c_3 \neq 0, c_1=c_2=0$ (completely symmetric connections),  then 
        \begin{equation}\Phi(|b|,p)=\frac{p}{\epsilon}\exp \left({c_3 \epsilon \int \frac{|b|^3}{\lambda(|b|)} d|b|} + F \left( \frac{e^{-c_3 \epsilon \int \frac{|b|^3}{\lambda(|b|)} d|b|}(\epsilon -p) }{p \epsilon}\right) \right), \; \; \text{where} \; \; F \; \; \text{is a free function of one variable}\,.
        \end{equation}
    \item[$(ii)$] If $c_3 = 0,c_1 \neq 0$, then
        \begin{enumerate}
            \item For $c_2=0$ (Weyl type connections):
            \begin{equation}
                \Phi(|b|,p)=\exp \left({\left(\frac{\epsilon}{c_1}-\frac{C_1}{2} e^{c_1 \rho(|b|)}  \right) \left(c_1\epsilon p +\frac{C_2-2c_1 \rho(|b|)}{C_1 e^{c_1 \rho(|b|)}-2\frac{\epsilon}{c_1}} \right)} \right).
            \end{equation}
            \item For $c_2 \neq 0, c_1 - 2c_2 \neq 0,$ (including Schr\"odinger symmetric connections):
            \begin{equation}
                \Phi(|b|,p)=\exp \left({\frac{\epsilon}{c_1} \left\{   c_2 p^2+ \epsilon(c_1-2c_2)p+ \left(C_3 + \epsilon c_1^2 \rho(|b|) \right) \right \}} \right),
            \end{equation}
            \item For $c_2 \neq 0, c_1-2c_2 =0$:
            \begin{equation}
                \Phi(|b|,p)=\exp \left({\left(\frac{\epsilon}{2}-\frac{1}{2} C_1 c_2 e^{4 \rho(|b|) c_2}\right) p^2 + \left(2 c_2 \rho(|b|) - \frac{1}{2} C_4 \right)} \right),
            \end{equation}
        \end{enumerate} 
        where, in the above, $C_1,C_2,C_3,C_4$ are arbitrary constants.
\end{enumerate}
\end{theorem}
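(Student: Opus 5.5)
We follow the same scheme as for Theorem \ref{thm:metrizability_classification_alpha_beta}: substitute the ansatz $L=A\Phi(|b|,p)$ into the Berwald conditions \eqref{def:Berwald_conditions}, with $\delta_\mu=\overset{\circ}{\delta}_\mu-D^\nu{}_\mu\dot\partial_\nu$, and read off polynomial identities in $\dot x$. Using $\overset{\circ}{\delta}_\mu A=0$, $\overset{\circ}{\delta}_\mu p=\frac{2U}{A}\overset{\circ}{\delta}_\mu U$ and $\dot\partial_\mu p=\frac{2U(u_\mu A-U\dot x_\mu)}{A^2}$, the condition $\delta_\mu L=0$ becomes, after multiplication by $A/2$,
\begin{equation*}
\tfrac{A}{2}\,\Phi'_{|b|}\,\partial_\mu|b|+\Phi'_p\,U\bigl(A\overset{\circ}{\delta}_\mu U-AD^\nu{}_\mu u_\nu+UD^\nu{}_\mu\dot x_\nu\bigr)=A\,\Phi\, D^\nu{}_\mu\dot x_\nu .
\end{equation*}
Here $D^\nu{}_\mu\dot x_\nu$ and $D^\nu{}_\mu u_\nu$ are given by \eqref{eq:D_dot_x}--\eqref{eq:D_b}, rewritten through $b=|b|u$, $B=|b|U$, $\langle b,b\rangle=\epsilon|b|^2$. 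Writing $\overset{\circ}{\delta}_\mu U=(\overset{\circ}{\nabla}_\mu u_\nu)\dot x^\nu$, the unknowns are $\Phi$, the gradient $\partial_\mu|b|$ and the tensor $\overset{\circ}{\nabla}_\mu u_\nu$.

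The next step is to extract the tensorial constraints. Dividing by $A\Phi$ and setting $\Psi=\ln\Phi$, the equation is linear in $\Psi'_{|b|}$ and $\Psi'_p$, whose coefficients are functions of $(|b|,p)$ only. Since $A$ and $U$ share no common factor, we decompose the vector-valued polynomial identity along the covectors $\dot x_\mu$, $u_\mu$ and $\partial_\mu|b|$. First, contracting with $u^\mu$ and with $\dot x^\mu$ and separating powers as in Lemma \ref{prop:sufficientcondalphabeta}, we expect to force $\partial_\mu|b|$ to be proportional to $u_\mu$, i.e. $d|b|=\lambda u$ (the Remark \eqref{eq:phi_b} guarantees $\Phi'_{|b|}\neq0$, hence the coefficient cannot be absorbed). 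Second, we expect to force $\overset{\circ}{\nabla}_\mu u_\nu=\tau a_{\mu\nu}+\sigma u_\mu u_\nu$. The identity $u^\nu\overset{\circ}{\nabla}_\mu u_\nu=0$, which follows from the normalization $u_\nu u^\nu=\epsilon$, then gives $\sigma=-\epsilon\tau$. Any term in $u_\mu$ cubic in $U$ coming from $c_3$, multiplied against the $\dot x_\mu$-terms coming from $c_1,c_2$, will produce incompatible monomials $U^4$ versus $AU^2$ in a single scalar PDE. We expect this to yield condition 1 ($c_3=0$ or $c_1=c_2=0$). Condition 2 should follow because $c_1=c_3=0$, $c_2\ne0$ leaves the $\dot x_\mu$-component equation inconsistent with nonzero $Q$.

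After these substitutions, we obtain two scalar first-order linear PDEs for $\Psi(|b|,p)$, namely the coefficients of $\dot x_\mu$ and of $u_\mu$, with coefficients polynomial in $p$ and depending on $\lambda(|b|)$ and $\tau(|b|)$. We then solve one of them for $\Psi'_{|b|}$ and $\Psi'_p$ (or use the characteristic method) and impose the compatibility condition $\partial_p\partial_{|b|}\Psi=\partial_{|b|}\partial_p\Psi$. We expect this compatibility to produce an ODE of Riccati/Bernoulli type for $\tau$ in the variable $\rho=\int|b|/\lambda\,d|b|$. Solving it should give \eqref{eq:tau_general}; the constraint $c_2(c_1-2c_2)C_1=0$ should come from a further order in $p$ of the compatibility condition (for $c_2\neq0$ a $p^2$ term appears in $\Psi$). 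Integrating the now-compatible system case by case (first $c_1=c_2=0$, then $c_3=0$ with the subcases $c_2=0$, $c_1\neq 2c_2$, $c_1=2c_2$) gives the explicit $\Phi$ in $(i)$–$(ii)$. In the completely symmetric case only one independent PDE survives, which explains the free function $F$ of the characteristic invariant. For sufficiency, we substitute back and check nondegeneracy generically via \eqref{eq:detg}, suitably generalized.

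The main obstacle is the second step, namely rigorously separating the vector equation into components when $\partial_\mu|b|$ is a priori an independent covector. To handle it, we first contract with a vector orthogonal to both $u$ and $\dot x$, at generic $\dot x$. This isolates the $\partial_\mu|b|$ term multiplied by $\Phi'_{|b|}\neq0$ and forces it to lie in the span of $u$. The subsequent compatibility computation for $\tau$ is lengthy but routine.
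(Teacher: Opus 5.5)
Your overall architecture matches the paper's: reduce $\delta_\mu L=0$ to the tensorial constraints $d|b|=\lambda u$ and $\overset{\circ}{\nabla}_\mu u_\nu=\tau(a_{\mu\nu}-\epsilon u_\mu u_\nu)$, then to two scalar PDEs for $\Psi=\ln\Phi$ (the $u_\mu$- and $\dot x_\mu$-components), then impose compatibility and split into cases. However, the decisive step is only ``expected'', and the one concrete device you propose for it fails.

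Contracting with a vector $w$ that is $a$-orthogonal to $u$ and $\dot x$ does kill the distortion terms, since $D^\nu{}_\mu\dot x_\nu$ and $D^\nu{}_\mu u_\nu$ lie in $\mathrm{span}\{\dot x_\mu,u_\mu\}$. It does not kill $\Phi'_p\,UA\,w^\mu\overset{\circ}{\delta}_\mu U$. At that stage $\overset{\circ}{\delta}_\mu U=(\overset{\circ}{\nabla}_\mu u_\nu)\dot x^\nu$ is exactly as unconstrained as $\partial_\mu|b|$; you omitted it from your list of covectors. What you actually get is the coupled relation $A\Phi'_{|b|}\,w^\mu\partial_\mu|b|=-2U\Phi'_p\,w^\mu\overset{\circ}{\delta}_\mu U$, which does not by itself place $d|b|$ in the span of $u$. (Also, the first term of your displayed equation should carry $\tfrac{A^2}{2}$, not $\tfrac{A}{2}$; otherwise the equation is not homogeneous.)

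The missing idea is the order of elimination:
\begin{enumerate}
\item Contract first with $u^\mu$. This gives $u^\mu\partial_\mu|b|+\mathcal F(|b|,p)\,u^\mu\overset{\circ}{\delta}_\mu U/U=\mathcal G(|b|,p)$.
\item A ratio of two linear forms in $\dot x$ can equal a function of $p=U^2/A$ only if it depends on $x$ alone. Hence $u^\mu\overset{\circ}{\nabla}_\mu u_\nu=f u_\nu$, and $u^\nu u_\nu=\epsilon$ forces $f=0$.
\item Only with $u^\mu\overset{\circ}{\delta}_\mu U=0$ in hand does the $\dot x^\mu$-contraction, differentiated in $\dot x^\nu$ and recontracted with $u^\nu$, give $\dot x^\mu\overset{\circ}{\nabla}_\mu U=k(|b|)(U^2-\epsilon A)$ and $\partial_\mu|b|=\lambda(|b|)u_\mu$.
\item Substituting back, the full equation becomes $\overset{\circ}{\delta}_\mu U=\mathcal T(\dot x_\mu-\epsilon Uu_\mu)$, and the same ratio argument yields $\mathcal T=\tau(|b|)$.
\end{enumerate}
The last step also fixes the antisymmetric part of $\overset{\circ}{\nabla}u$. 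Together with step 3, it supplies the dependence of $\lambda$ and $\tau$ on $|b|$ alone, which your integration step silently assumes.

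Your account of conditions 1 and 2 is also misplaced. Since $\Psi$ is an arbitrary function of $p$, you cannot separate monomials such as $U^4$ versus $AU^2$ inside a single scalar PDE; the obstruction lives in the compatibility of the two PDEs.
\begin{itemize}
\item If $\tau=0$, the $\dot x_\mu$-equation $\tau=-\frac{|b|}{2\Psi'_p}(2pc_2+\epsilon c_1-2\epsilon c_2)$ directly forces $c_1=c_2=0$.
\item If $\tau\neq0$, that equation makes $\Psi$ quadratic in $p$, with a free function $k_1(|b|)$. Compare the resulting $\Psi'_{|b|}$ with the one given by the $u_\mu$-equation. The $p^3$ coefficient gives $c_2c_3=0$. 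For $c_2=0$, the $p^2$ coefficient gives $c_1c_3=0$, and $c_1=0$ is excluded because it would force $\tau=0$.
\item For $c_1=c_3=0\neq c_2$ with $\tau\neq0$, the $\dot x_\mu$-equation alone is perfectly consistent. The contradiction appears only after combining the $p^2$ and $p^1$ compatibility equations into $(c_1-2c_2)(2\epsilon\tau+|b|c_1)=0$, which then gives $\tau=0$.
\item The same relation is what produces $C_1=0$ when $c_2\neq0$ and $c_1\neq2c_2$.
\end{itemize}
So the compatibility computation you call routine is precisely where conditions 1--2 and the constraint on $C_1$ come from, and it has to be carried out in this form.
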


The proof of the above statement will be made over the next two subsections, as follows:
\begin{enumerate}
\item In the first subsection, we rewrite the Berwald metrizability PDE system \eqref{def:Berwald_conditions} using the ansatz \eqref{def:generalized_ab_metric} and use the contractions of the newly obtained PDEs with $u$ and $\dot{x}$ to deduce the constraints upon the derivatives of $\langle b,b \rangle$ and $u_\mu$. This is done in four lemmas in Section~\ref{ssec:somelem}.
\item Then, using the obtained expressions, we proceed to the integration of the metrizability conditions in Section~\ref{ssec:prfmain}. 
\end{enumerate}

Before starting the proof, it is worth noting that extending our search to generalized $(\alpha,\beta)$-metrics, we have completely eliminated the constraint $c_2=0$; in particular, it turns out that, under appropriate conditions upon the defining one-form, Schrödinger connections are Finsler metrizable. 

\subsection{Some lemmas on necessary conditions  }\label{ssec:somelem}
The first step will be to rewrite the Berwald metrizability conditions \eqref{def:Berwald_conditions} for our ansatz \eqref{def:generalized_ab_metric}.  We thus get an analogue of Lemma \ref{lem:metrizability_(alpha,beta)_D}.

\begin{lemma}\label{lem:metrizability_generalized_(alpha,beta)_D}
A torsion-free connection $\nabla $ on $M$ is metrizable by a generalized  $\left(
\alpha ,\beta \right) $-metric if and only if its distortion $D$ satisfies
\begin{equation}
\dfrac{1}{2}A^{2}\Phi_{\left\vert b\right\vert }^{\prime} \partial_{\mu} \left\vert b\right\vert 
+ \Phi _{p}^{\prime }U\left( A\overset{\circ }{\delta }_{\mu }U-A\left( D^{\nu}{}_{\mu}u_{\nu }\right) +U\left( D^{\nu}{}_{\mu}\dot{x}_{\nu }\right) \right) 
=A\left( D^{\nu}{}_{\mu} \dot{x}_{\nu }\right) \Phi \,, \qquad \forall \mu=0,\dots,3 \,,
\label{eq:Lemma_general_(alpha,beta)_D}
\end{equation}
where $\Phi _{p}^{\prime }\equiv \partial_p \Phi$ and $\Phi _{|b|}^{\prime }\equiv \partial_{|b|}\Phi$.
\end{lemma}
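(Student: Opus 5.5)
The plan is to repeat the computation in the proof of Lemma \ref{lem:metrizability_(alpha,beta)_D}, now keeping track of the extra $x$-dependence through $|b|$. By the Muzsnay characterization recalled before \eqref{def:Berwald_conditions}, a nondegenerate, positively 2-homogeneous $L$ metrizes $\nabla$ if and only if $\delta_\mu L=0$ for $\mu=0,\dots,3$. The ansatz \eqref{def:generalized_ab_metric} is automatically 2-homogeneous in $\dot x$, because $p=U^2/A$ is 0-homogeneous and $|b|$ does not depend on $\dot x$. It therefore suffices to compute $\delta_\mu L$ explicitly and show that, after multiplication by $A/2$, it equals the left-hand side of \eqref{eq:Lemma_general_(alpha,beta)_D} minus its right-hand side.

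The first step is to split $\delta_\mu=\overset{\circ}{\delta}_\mu - D^{\nu}{}_{\mu}\dot\partial_\nu$, as in \eqref{eq:CoeffsDnumu}, and record the elementary derivatives:
\begin{itemize}
\item $\overset{\circ}{\delta}_\mu A=0$, since $A$ is horizontally constant for the Levi-Civita connection;
\item $\overset{\circ}{\delta}_\mu |b|=\partial_\mu|b|$, since $|b|$ is a function on $M$ only;
\item $\overset{\circ}{\delta}_\mu p = 2U\,\overset{\circ}{\delta}_\mu U/A$;
\item $\dot\partial_\nu p = 2U\left(u_\nu A - U\dot x_\nu\right)/A^2$, which is the analogue of \eqref{eq:derivs_s} with $b$ replaced by $u$.
\end{itemize}
By the chain rule these give
\begin{equation*}
\overset{\circ}{\delta}_\mu L = A\,\Phi'_{|b|}\,\partial_\mu|b| + 2\Phi'_p\,U\,\overset{\circ}{\delta}_\mu U ,
\qquad
\dot\partial_\nu L = 2\dot x_\nu \Phi + 2\Phi'_p\,\frac{U\left(u_\nu A - U\dot x_\nu\right)}{A} .
\end{equation*}

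The second step is to contract $\dot\partial_\nu L$ with $D^{\nu}{}_{\mu}$ and assemble the result:
\begin{equation*}
\delta_\mu L = A\,\Phi'_{|b|}\,\partial_\mu|b| + 2\Phi'_p U\,\overset{\circ}{\delta}_\mu U - 2\Phi\, D^{\nu}{}_{\mu}\dot x_\nu - \frac{2\Phi'_p U}{A}\left(A\, D^{\nu}{}_{\mu}u_\nu - U\, D^{\nu}{}_{\mu}\dot x_\nu\right).
\end{equation*}
Multiplying by $A/2$, which is nonzero on the relevant admissible domain, and grouping the $\Phi'_p$ terms gives exactly \eqref{eq:Lemma_general_(alpha,beta)_D}. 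Since each step is an equivalence, both directions of the lemma follow.

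I do not expect a genuine obstacle; the argument is routine bookkeeping. Two points need care. First, the term $\Phi'_{|b|}\partial_\mu|b|$ survives only in $\overset{\circ}{\delta}_\mu$ and not in the vertical derivative, because $|b|$ is independent of $\dot x$; this is the sole new feature compared with Lemma \ref{lem:metrizability_(alpha,beta)_D}. Second, in $\dot\partial_\nu p$ the relevant covector is $u_\nu$ rather than $b_\nu$, which explains why $D^{\nu}{}_{\mu}u_\nu$, and not $D^{\nu}{}_{\mu}b_\nu$, appears in the final formula.
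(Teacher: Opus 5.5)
Your proposal is correct and follows the paper's proof: you substitute $\delta_\mu=\overset{\circ}{\delta}_\mu-D^{\nu}{}_{\mu}\dot\partial_\nu$ and the identities for $\overset{\circ}{\delta}_\mu p$ and $\dot\partial_\nu p$ into $\delta_\mu L=0$, then multiply by $A/2$. Your term-by-term expansion of $\delta_\mu L$, including the new $A\,\Phi'_{|b|}\partial_\mu|b|$ contribution, is right.
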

\begin{proof}
The statement follows from a straightforward computation, substituting $\delta_\mu = \overset{\circ}{\delta}_\mu-D^{\nu}{}_{\mu} \dot \partial_{\nu}$, together with the identities
\begin{equation}\label{eq:deltapdp}
\overset{\circ }{\delta }_{\mu }p=2\dfrac{U}{A}\overset{\circ }{\delta }_{\mu }U = 2 \frac{p}{U}\overset{\circ }{\delta }_{\mu }U,~\quad 
\dot{\partial}_{\mu} p=2U\dfrac{u_{\mu }A-U\dot{x}_{\mu }}{A^{2}},
\end{equation}
into the Berwald metrizability conditions $\delta_\mu L =0$.
\end{proof}

In particular, for a connection with vectorial nonmetricity, the distortion components take the form
    \begin{equation}
D_{~\nu \rho }^{\mu }=\left\vert b\right\vert \left[ \dfrac{1}{2}\left(
2c_{2}-c_{1}\right) u^{\mu }a_{\nu \rho }+\dfrac{1}{2}c_{1}u_{\nu }\delta
_{\rho }^{\mu }+\dfrac{1}{2}c_{1}u_{\rho }\delta _{\nu }^{\mu }+\dfrac{1}{2}
c_{3}\left\vert b\right\vert ^{2}u^{\mu }u_{\nu }u_{\rho }\right] \, ;
\label{eq:distortion_b_u}
\end{equation}
this leads, via $D^{\mu}_{~\nu} = D^{\mu}_{~\nu\rho}\dot{x}^{\rho}$, to 
\begin{eqnarray}
D_{~\nu }^{\mu }\dot{x}_{\mu } &=&\left\vert b\right\vert \left[ Uc_{2}\dot{x
}_{\nu }+\dfrac{1}{2}\left( c_{3}\left\vert b\right\vert
^{2}U^{2}+Ac_{1}\right) u_{\nu }\right] ,  \label{D_x_1} \\
D_{~\nu }^{\mu }u_{\mu } &=&\left\vert b\right\vert \left[ \dfrac{1}{2}
\left( 2c_{2}-c_{1}\right) \epsilon \dot{x}_{\nu }+\left( c_{1}+\dfrac{1}{2}
c_{3}\epsilon \left\vert b\right\vert ^{2}\right) Uu_{\nu }\right] .
\label{D_u_1}
\end{eqnarray}
The next step is to contract the obtained metrizability conditions \eqref{eq:Lemma_general_(alpha,beta)_D} in turn with the components of $\dot{x}$ and $u$ to obtain constraints on the derivatives $\partial_{\mu} \left\vert b\right\vert$ and $\overset{\circ }{\delta }_{\mu }U$.  This will lead to further lemmas.

\begin{lemma}\label{lem:U_prime_u_zero}
  If the connection $\nabla$ with vectorial nonmetricity is metrizable by a generalized  $\left(
\alpha ,\beta \right) $-metric, then, in any local chart:
\begin{enumerate}
\item  $   u^{\mu} \overset{\circ}{\delta}_{\mu} U=0 \,.$
\item The expression $u^{\mu} \partial_{\mu} |b|$ is a function of $|b|$ alone.
\end{enumerate}

\end{lemma}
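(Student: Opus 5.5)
The plan is to contract the metrizability conditions \eqref{eq:Lemma_general_(alpha,beta)_D} of Lemma~\ref{lem:metrizability_generalized_(alpha,beta)_D} with $u^\mu$. I then want to show that every term in the resulting scalar equation depends on $\dot x$ only through $p=U^2/A$, except one: the term containing $u^\mu\overset{\circ}{\delta}_\mu U$.

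First I rewrite that derivative. Since $\overset{\circ}{\delta}_\mu A=0$ and $U=u_\rho\dot x^\rho$, we have $\overset{\circ}{\delta}_\mu U=(\overset{\circ}{\nabla}_\mu u_\rho)\dot x^\rho$. Hence
\begin{equation*}
u^\mu\overset{\circ}{\delta}_\mu U = w_\rho \dot x^\rho=:W, \qquad w_\rho:=u^\mu\overset{\circ}{\nabla}_\mu u_\rho .
\end{equation*}
Differentiating the normalization $u_\rho u^\rho=\epsilon$ gives $u^\rho w_\rho=0$, so $w$ is $a$-orthogonal to $u$.

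Next, using \eqref{D_x_1}--\eqref{D_u_1}, I compute the contractions
\begin{equation*}
u^\mu D^\nu{}_\mu \dot x_\nu=|b|\Big[c_2U^2+\tfrac{\epsilon}{2}\big(c_3|b|^2U^2+c_1A\big)\Big],
\end{equation*}
and, similarly, $u^\mu D^\nu{}_\mu u_\nu$, which turns out to be $|b|U$ times a constant depending on $|b|$ and $\epsilon$. Substituting these and dividing by $A^2$, the contracted equation takes the form
\begin{equation*}
\tfrac12\Phi'_{|b|}\,(u^\mu\partial_\mu|b|)+\Phi'_p\,\frac{UW}{A}=\mathcal{F}(|b|,p),
\end{equation*}
where $\mathcal{F}$ is an explicit expression built from $\Phi$, $\Phi'_p$, $p$, $|b|$, $\epsilon$ and the constants $c_i$.

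For item 1, I fix a point $x$ and view this as an identity in $\dot x\in\mathcal{A}_x$. Every term except $\Phi'_p\,UW/A$ depends on $\dot x$ only through $p$. Suppose $w\neq0$. Since $w\perp u$ and $\dim M=4$, I can perturb $\dot x$ in a direction that keeps $A$ and $U$ fixed but changes $W$; concretely, move along a direction $v$ orthogonal to both $u$ and $\dot x$ and having nonzero $w(v)$, correcting at second order to restore $A$. This is possible on an open set of $\mathcal{A}_x$. Along such a perturbation $p$ is constant while $UW/A$ varies, which forces $\Phi'_p=0$ on an open set.

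That would make $L=A\Phi(|b|)$ conformally pseudo-Riemannian with a conformal factor depending only on $x$. The canonical connection would then be the Levi-Civita connection of $\Phi(|b|)\,a$. Its distortion from $\overset{\circ}{\nabla}$ is of pure Weyl form built from $d\log\Phi$, and I must argue this is incompatible with the vectorial-nonmetricity structure assumed, or otherwise falls back into the excluded degenerate $(\alpha,\beta)$ case. Hence $W\equiv0$, i.e.\ $u^\mu\overset{\circ}{\delta}_\mu U=0$.

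An alternative, cleaner route for this step is the polynomial argument used in Lemma~\ref{lem:rho1}. Multiply through by $A^2$ and use that $A$ and $U$ share no common factors. The term $\Phi'_p\,UWA$ is the only one not expressible through $A$ and $U^2$ alone, which forces $W$ to be proportional to $U$. Combined with $w\perp u$, this gives $w=0$.

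For item 2, once $W=0$ the equation reads
\begin{equation*}
\tfrac12\Phi'_{|b|}\,(u^\mu\partial_\mu|b|)=\mathcal{F}(|b|,p).
\end{equation*}
By the Remark \eqref{eq:phi_b}, $\Phi'_{|b|}\neq0$ away from isolated points. Choose any $p_0$ in the admissible range where $\Phi'_{|b|}(|b|,p_0)\neq0$. Then $u^\mu\partial_\mu|b|=2\mathcal{F}(|b|,p_0)/\Phi'_{|b|}(|b|,p_0)$, which depends on $x$ only through $|b|$.

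The main obstacle is the step excluding $\Phi'_p=0$ and justifying the independent variation of $W$ at fixed $p$. I would try the polynomial-factor argument first, since it mirrors Lemma~\ref{lem:rho1}, and resort to the direct perturbation argument only if needed.
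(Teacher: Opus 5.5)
Your main line of argument is the paper's proof. You contract with $u^\mu$, observe that every term except the one containing $u^\mu\overset{\circ}{\delta}_\mu U$ depends on $\dot x$ only through $p$, deduce $u^\mu\overset{\circ}{\delta}_\mu U=fU$, kill $f$ with $u^\rho w_\rho=0$, and then read off item 2. Your ``cleaner route'' is exactly the version the paper uses.

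One precision is needed. Because $\Phi$, $\Phi'_p$ and $\Phi'_{|b|}$ are arbitrary functions of $p$, the equation multiplied by $A^2$ is not a polynomial identity, so the coprimality step should be run as the paper does it. Divide by $\Phi'_p$, so that $W/U=g(p)$ at fixed $x$. On each cone $U^2=p_0A$ (an irreducible quadric) the linear form $W-g(p_0)U$ vanishes on an open set, hence identically, which gives $W=fU$. Your perturbation route also works, but it only shows $\Phi'_p\equiv0$ wherever $w\neq0$, so it lands on the same unresolved case.

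The genuine gap is the one you flag yourself: the case $\Phi'_p\equiv0$. There the $W$-term drops out, and neither of your routes constrains $w$. The paper passes over this with the bare assertion that $\mathcal{F}=2p\Phi'_p/\Phi'_{|b|}$ is non-vanishing.

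Your proposed way out is false. $\Phi'_p\equiv0$ is not incompatible with vectorial nonmetricity, and it is not one of the excluded $(\alpha,\beta)$ reductions, which are $|b|$ constant or $\Phi'_{|b|}=0$. Indeed, if $c_2=c_3=0$ and $c_1b=d\ln\Phi(|b|)$, the Levi-Civita connection of $\Phi(|b|)\,a$ is exactly the Weyl connection, and $\Phi'_{|b|}\neq0$.

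Instead, prove the conclusion directly in this case from the uncontracted equations. With $\Phi'_p=0$ they read
\begin{equation*}
\tfrac12\Phi'_{|b|}\,\partial_\mu|b|=|b|\,\Phi\Big(c_2\tfrac{U}{A}\dot x_\mu+\tfrac12\big(c_1+c_3|b|^2p\big)u_\mu\Big).
\end{equation*}
Since the left-hand side is $\dot x$-independent, this forces $c_2=c_3=0$, hence $c_1\neq0$, and $\partial_\mu|b|=\lambda(|b|)u_\mu$ with $\lambda=c_1|b|\Phi/\Phi'_{|b|}$.

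Then $u=d|b|/\lambda(|b|)$ is closed, so $\overset{\circ}{\nabla}_\mu u_\nu$ is symmetric. Hence $u^\mu\overset{\circ}{\nabla}_\mu u_\nu=u^\mu\overset{\circ}{\nabla}_\nu u_\mu=\tfrac12\partial_\nu\epsilon=0$, which is item 1, and $u^\mu\partial_\mu|b|=\epsilon\lambda(|b|)$, which is item 2.

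If $\Phi'_p(|b|(x),\cdot)$ vanishes identically only at some points $x$, combine the two arguments. Use your main argument at the points where it does not vanish, this argument on the interior of the remaining set, and continuity on its boundary.
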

\begin{proof}
 
Contracting \eqref{eq:Lemma_general_(alpha,beta)_D} with $u^{\mu}$, we obtain:
\begin{equation}
A^{2}\dfrac{1}{2}\Phi _{\left\vert b\right\vert }^{\prime }\left(u^{\mu} \partial_{\mu} \left\vert
b\right\vert\right) +\Phi _{p}^{\prime }U\left( A\left(
u^{\mu }\overset{\circ }{\delta }_{\mu }U\right) -A\left( D^{\nu}{}_{\mu
}u_{\nu }u^{\mu }\right) +U\left( D^{\nu}{}_{\mu}\dot{x}_{\nu }u^{\mu
}\right) \right) =A\left( D^{\nu}{}_{\mu}\dot{x}_{\nu }u^{\mu }\right) \Phi .
\label{eq:contracted_u_1}
\end{equation} 
Since $\Phi'_{|b|} \neq 0$, we can divide equation \eqref{eq:contracted_u_1} by $\frac{1}{2} A^2 \Phi'_{|b|}$ and recast it (after substituting the above expressions for the contracted distortion) as
  
\begin{equation}\label{eq:Lemma_F_G}
     u^{\mu} \partial_{\mu} |b|+ \mathcal{F}(|b|,p) \frac{\left( u^\mu \overset{\circ}{\delta}_{\mu} U \right)}{U}=\mathcal{G}(|b|,p)\, ,
\end{equation}
where
\begin{align}
    \mathcal{F}(|b|,p)&:= \frac{2 p \Phi'_{p}}{\Phi'_{|b|}},\\
    \mathcal{G}(|b|,p)&:=\frac{|b|}{\Phi'_{|b|}} \left\{\Phi \left( 2p c_2 + \epsilon c_1  + \epsilon c_3 p |b|^2 \right) +p \Phi'_{p} \left [\epsilon \left(2 c_2 + c_1 + \epsilon c_3  |b|^2 \right) - \left(2pc_2 + \epsilon c_1 + \epsilon c_3 p |b|^2 \right) \right] \right\} \, .
\end{align}
As $\mathcal{F}(|b|,p)$ is non-vanishing, we can use \eqref{eq:Lemma_F_G} to equate
\begin{align}\label{eq:udU/U}
    \frac{\left( u^\mu \overset{\circ}{\delta}_{\mu} U \right)}{U} = \frac{\mathcal{G}(|b|,p) - u^{\mu} \partial_{\mu} |b|}{\mathcal{F}(|b|,p)}\,;
\end{align}
this is an equality where the ratio on the right-hand side just depends on $\dot x$ only through $p = \tfrac{U^2}{A}$, which is the irreducible ratio of $2$-homogeneous polynomials in $\dot x$, whereas its left-hand side $\tfrac{\left( u^\mu \overset{\circ}{\delta}_{\mu} U \right)}{U}$ is a ratio of $1$-homogeneous polynomials in $\dot{x}$. The only way that \eqref{eq:udU/U} can have non-trivial solutions is if the left-hand side only depends on $x^\mu$, i.e.\
\begin{equation}
    u^{\mu} \overset{\circ}{\delta}_{\mu} U= f U
\end{equation}
for some smooth function $f$ depending on $x^{\mu}$  only. Differentiating this relation with respect to $\dot{x}^{\nu}$, we find
\begin{equation}\label{eq:Lemma_f}
   u^{\mu} \overset{\circ}{\nabla}_{\mu} u_{\nu}= f u_{\nu}.
\end{equation}
On the other hand,  the quantity $u^{\nu} u_{\nu}:=\epsilon$ is either $+1$ or $-1$, so Levi-Civita differentiation yields $ u^{\nu} \overset{\circ}{\nabla}_{\mu} u_{\nu}=0$. Therefore, contracting the previous equation with $u^{\nu}$ immediately implies $f=0$, that is,
\begin{equation}
    u^{\mu} \overset{\circ}{\delta}_{\mu} U=0.
\end{equation}
 The second claim follows directly by substituting the above equality into  \eqref{eq:Lemma_F_G} and taking into account that $ u^{\mu} \partial_{\mu} |b|$ does not depend on $\dot{x}$.
\end{proof}


\begin{lemma} \label{lem:partial_derivs_|b|}
    If a connection with vectorial nonmetricity is metrizable by a generalized $(\alpha,\beta)$-metric, then, corresponding to each chart domain, there exists a function $\lambda=\lambda (|b|)$, such that
    \begin{equation} \label{eq:partial_derivs_|b|}
        \partial_{\mu}|b|= \lambda u_{\mu}.
    \end{equation}
\end{lemma}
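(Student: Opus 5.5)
The plan is to go back to the full metrizability condition \eqref{eq:Lemma_general_(alpha,beta)_D} of Lemma \ref{lem:metrizability_generalized_(alpha,beta)_D}, for an arbitrary index $\mu$, and substitute the contracted distortions \eqref{D_x_1}--\eqref{D_u_1}. After dividing by $\tfrac12 A^2\Phi'_{|b|}$ (allowed by \eqref{eq:phi_b}), the $\mu$-th equation takes the form
\begin{equation*}
\partial_\mu|b| + \mathcal{F}(|b|,p)\,\frac{\overset{\circ}{\delta}_\mu U}{U} = \mathcal{H}_1(|b|,p)\,\frac{U}{A}\,\dot x_\mu+\mathcal{H}_2(|b|,p)\,u_\mu ,
\end{equation*}
where $\mathcal{H}_1,\mathcal{H}_2$ are explicit combinations of $\Phi,\Phi'_p,\Phi'_{|b|}$ and of the constants $c_i$. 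This mirrors the structure of \eqref{eq:Lemma_F_G}. The idea is then to project this covector identity in two complementary ways: along $u^\mu$, which gives Lemma \ref{lem:U_prime_u_zero}, and onto directions transversal to $u$.

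The key step is to show that $\partial_\mu|b|$ has no component orthogonal to $u$. Let $w^\mu$ be any vector field with $w^\mu u_\mu=0$ at the point considered, and contract with it:
\begin{equation*}
w^\mu\partial_\mu|b| + \mathcal{F}\,\frac{w^\mu\overset{\circ}{\delta}_\mu U}{U}=\mathcal{H}_1\,\frac{U\,(w_\mu\dot x^\mu)}{A}.
\end{equation*}
Here $w^\mu\overset{\circ}{\delta}_\mu U=w^\mu\dot x^\nu\overset{\circ}{\nabla}_\mu u_\nu$ and $w_\mu\dot x^\mu$ are linear in $\dot x$, whereas $\mathcal{F},\mathcal{H}_1$ depend on $\dot x$ only through $p=U^2/A$. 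I will argue exactly as in the proof of Lemma \ref{lem:rho1} and Lemma \ref{lem:U_prime_u_zero}: multiply by $UA$ and compare the dependence on $\dot x$. Because $A$ is irreducible and $U$, $w_\mu\dot x^\mu$ are independent linear forms, the $\dot x$-independent term $w^\mu\partial_\mu|b|$ cannot be balanced by the other two terms unless it vanishes. The alternative would be that all three terms reorganize into a function of $p$ alone, and I will show that this forces $\mathcal{F}$ to be constant in $p$, which is incompatible with the nondegeneracy of $L$.

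A cleaner way to see this is to pick $\dot x$ in the hyperplane $U=0$ (when $\epsilon$ and the signature allow it, with $A\neq0$). There $p=0$, so the right-hand side vanishes, and the remaining terms must be balanced by the $\overset{\circ}{\delta}_\mu U/U$ term. That term has a pole along $U=0$ unless $\mathcal{F}(|b|,p)\to0$ as $p\to0$ at a suitable rate. Working out this limiting behaviour is where I expect the main obstacle: it requires controlling $\mathcal{F}=2p\Phi'_p/\Phi'_{|b|}$ and $\mathcal{H}_1$ near $p=0$, and possibly treating separately the degenerate cases in which $\mathcal{H}_1$ or $\mathcal{F}$ vanishes identically. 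My first attempt will be a homogeneity/degree comparison in $\dot x$ after clearing denominators, which yields $w^\mu\partial_\mu|b|=0$ for all $w\perp u$.

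Since $u$ is normalized ($u^\mu u_\mu=\epsilon\neq0$), the conclusion $w^\mu\partial_\mu|b|=0$ for all $w\perp u$ gives $\partial_\mu|b|=\epsilon\,(u^\nu\partial_\nu|b|)\,u_\mu$. By Lemma \ref{lem:U_prime_u_zero}(2), $u^\nu\partial_\nu|b|$ is a function of $|b|$ alone. Setting $\lambda(|b|):=\epsilon\,u^\nu\partial_\nu|b|$ then gives \eqref{eq:partial_derivs_|b|} on each chart domain.
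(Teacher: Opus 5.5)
Your split into the $u$-projection (Lemma~\ref{lem:U_prime_u_zero}) and the $u^\perp$-projection is a different route from the paper's. The paper contracts with $\dot x^\mu$, first derives $\dot x^\mu\overset{\circ}{\nabla}_\mu U=k(|b|)(U^2-\epsilon A)$, and then uses that $\dot x^\mu\partial_\mu|b|$ is linear in $\dot x$. Your transverse equation and your final step are correct. The decisive claim $w^\mu\partial_\mu|b|=0$, however, is not established, and the reason you give for it is wrong.

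Multiplied by $U$, the transverse equation reads $cU+\mathcal{F}(p)\,\ell_1(\dot x)=p\,\mathcal{H}_1(p)\,\ell_2(\dot x)$, where $c=w^\mu\partial_\mu|b|$, $\ell_1(\dot x)=w^\mu\dot x^\nu\overset{\circ}{\nabla}_\mu u_\nu$ and $\ell_2(\dot x)=w_\mu\dot x^\mu$. Since $\mathcal{F}$ and $\mathcal{H}_1$ are arbitrary, non-polynomial functions of $p$, no degree count applies. Irreducibility of $A$ plus independence of $U$ and $\ell_2$ is not enough: if $\ell_1$ contained a term $\alpha U$, then $cU$ would be balanced whenever $\mathcal{F}\equiv -c/\alpha$ is independent of $p$. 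You anticipate this alternative, but nondegeneracy does not exclude it. For $\Phi=G(p\,h(|b|))$ one finds $\mathcal{F}=2h/h'$, independent of $p$, while $L=A\,G(hU^2/A)$ is nondegenerate for generic $G$. Your fallback via the hyperplane $U=0$ is not reliable either. The value $p=0$ need not lie in the region where $\Phi$ is defined and smooth (cf.\ case $(i)$ of Theorem~\ref{thm:metrizability_general_ab}, where $F$ is evaluated at $(\epsilon-p)/(p\epsilon)$).

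The missing ingredient is the normalization of $u$: $\ell_1(u^\sharp)=w^\mu u^\nu\overset{\circ}{\nabla}_\mu u_\nu=\tfrac12 w^\mu\partial_\mu(u^\nu u_\nu)=0$, so $\ell_1$ has no $U$-component. To exploit it, freeze $p=p_0$ instead of counting degrees. The difference of the two sides is then a linear form in $\dot x$ vanishing on an open piece of the cone $U^2=p_0A$. For $p_0\neq 0,\epsilon$ this is a nondegenerate quadric cone, and no open piece of it lies in a hyperplane. Hence $c\,u_\nu+\mathcal{F}(p_0)\,w^\mu\overset{\circ}{\nabla}_\mu u_\nu-p_0\mathcal{H}_1(p_0)\,w_\nu=0$, and contracting with $u^\nu$ gives $\epsilon c=0$.

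You can even skip $w$: freezing $p$ and contracting with $u^\nu$ in the full $\mu$-th equation gives directly $\partial_\mu|b|=(\epsilon p_0\mathcal{H}_1(|b|,p_0)+\mathcal{H}_2(|b|,p_0))\,u_\mu$. This shows at once that $\lambda$ depends on $|b|$ alone. Repaired this way, your route is shorter than the paper's. It needs neither the intermediate form of $\dot x^\mu\overset{\circ}{\nabla}_\mu U$ nor the paper's tacit assumption that the coefficient multiplying $\dot x^\mu\overset{\circ}{\nabla}_\mu U$ after the $u^\nu\dot\partial_\nu$ step is nonzero.
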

\begin{proof}

It is more convenient to work, in the following, in terms of
\begin{equation}\label{def:Psi}
\Psi=\ln \Phi,
\end{equation}
such that the metrizability conditions read 
\begin{equation}\label{eq:general_(alpha,beta)_D_with_Psi}
    \frac{1}{2} A^2 \Psi'_{|b|} \partial_{\mu} |b|+\Psi '_{p} U \left(A \overset{\circ}{\nabla}_{\mu} U- A \left( D^{\nu}{}_{\mu} u_{\nu} \right) +U \left( D^{\nu}{}_{\mu} \dot x_{\nu} \right) \right)=A \left( D^{\nu}{}_{\mu} \dot x_{\nu} \right).
\end{equation}
Contracting these equations with $\dot{x}^{\mu}$ , this becomes:
\begin{equation}\label{eq:metrizability_generalizedalphabeta}
    \frac{1}{2} A^2 \Psi'_{|b|} \left(  \dot x^{\mu} \partial_{\mu}|b| \right)+\Psi'_{p} U \left(A \dot{x}^{\mu} \overset{\circ}{\nabla}_{\mu} U  -A \left( D^{\nu}{}_{\mu} u_{\nu} \dot x^{\mu} \right) + U \left(D^{\nu}{}_{\mu} \dot x_{\nu} \dot x^{\mu} \right) \right)-A \left(D^{\nu}{}_{\mu} \dot x_{\nu} \dot x^{\mu} \right)=0.
\end{equation}
Then, substitution of the contracted distortion
\begin{equation}\label{eq:generalizedalphabetacontractions}
\begin{aligned}
    D^{\nu}{}_{\mu} \dot x^{\mu} \dot x_{\nu}
    &=|b| \left(\frac{1}{2} c_1 + c_2 \right) AU + \frac{1}{2} c_3 |b|^3 U^3\\
    D^{\nu}{}_{\mu} u_{\nu} \dot x^{\mu}
    &=\frac{1}{2} |b| \epsilon (2c_2 -c_1) A + |b| \left(\frac{1}{2} \epsilon c_3 |b|^2 + c_1 \right)U^2
\end{aligned}
\end{equation}
and taking into account the assumption  $\Psi'_{|b|} \neq 0$ made in the beginning of this section, equation \eqref{eq:metrizability_generalizedalphabeta} can be recast as
\begin{equation}\label{eq:contractedxmu_generalized_case}
    \dot x^{\mu} \partial_{\mu} |b| =\frac{\mathcal{E}}{U} \dot{x}^{\mu} \overset{\circ}{\nabla}_{\mu} U + \mathcal{W} U,
\end{equation}
where
\begin{equation}\label{eq:formofEandW}
    \mathcal{E}=-2p \frac{\Psi'_{p}}{\Psi'_{|b|}}, \; \; \mathcal{W}=\left(-\frac{|b|^3}{\Psi'_{|b|}} \Psi'_{p} c_3 \right) p^2 + \frac{|b|}{\Psi'_{|b|}} \left(\Psi'_{p} c_1 - 2\Psi'_{p} c_2 + |b|^2 c_3 + |b|^2 \epsilon \Psi'_{p} c_3 \right)p + \frac{|b|}{\Psi'_{|b|}} \left(c_1+ 2c_2-\epsilon \Psi'_{p}c_1 + 2 \epsilon \Psi'_{p} c_2 \right).
\end{equation}
are smooth functions of $|b|$ and $p$.

Differentiation of  \eqref{eq:contractedxmu_generalized_case} with respect to $\dot{x}^{\nu },$  followed by contraction with $u^{\nu }$ then turns it into:
\begin{equation}
u^{\nu }\partial _{\nu }\left\vert b\right\vert =\mathcal{E}_{p}^{\prime
}\left( u^{\nu }\dot{\partial}_{\nu }p\right) \dfrac{\dot{x}^{\mu }\overset{%
\circ }{\nabla }_{\mu }U}{U}+\mathcal{E}\left( \dfrac{u^{\nu }\overset{\circ 
}{\nabla }_{\nu }U+\dot{x}^{\mu }u^{\nu }\overset{\circ }{\nabla }_{\mu
}u_{\nu }-\epsilon \dot{x}^{\mu }\overset{\circ }{\nabla }_{\mu }U}{U^{2}}%
\right) +\mathcal{W}_{p}^{\prime }\left( u^{\nu }\dot{\partial}_{\nu
}p\right) U+\epsilon \mathcal{W}
\end{equation}

Taking into account point 1. of  Lemma~\ref{lem:U_prime_u_zero}, together with $u^{\nu }\overset{%
\circ }{\nabla }_{\mu }u_{\nu }=0,$ the first two terms in the large bracket vanish, that is:
\begin{equation}
u^{\nu }\partial _{\nu }\left\vert b\right\vert =\left( \dfrac{\mathcal{E}%
_{p}^{\prime }\left( u^{\nu }\dot{\partial}_{\nu }p\right) }{U}-\dfrac{%
\epsilon \mathcal{E}}{U^{2}}\right) \dot{x}^{\mu }\overset{\circ }{\nabla }%
_{\mu }U+\mathcal{W}_{p}^{\prime }\left( u^{\nu }\dot{\partial}_{\nu
}p\right) U+\epsilon \mathcal{W}.
\end{equation}
Lemma \ref{lem:U_prime_u_zero} point 2. then tells us that the left-hand side $%
u^{\nu }\partial _{\nu }\left\vert b\right\vert $ is a function of $%
\left\vert b\right\vert $ alone; since, taking into account \eqref{eq:deltapdp}, we find that $u^{\nu }\dot{\partial}_{\nu }p=\dfrac{%
2U\left( \epsilon A-U^{2}\right) }{A^{2}}$ depends on $A$ and $U$ only, it turns out that $\dot{x}^{\mu }\overset{%
\circ }{\nabla }_{\mu }U$ can only depend on $|b|$, $A$ and $U$.

But, on the other hand, this is a homogeneous second degree polynomial expression in the coordinates of $\dot{x}$. It turns out that the only such possibility is
 \begin{equation}
        \dot x^{\mu} \overset{\circ}{\nabla}_{\mu} U= k_1 (|b|)A + k_2(|b|) U^2,
    \end{equation}
where the coefficients $k_1$ and $k_2$ are smooth functions of $|b|$. Differentiating twice with respect to $\dot{x}$ then leads to
    \begin{equation}\label{eq:contractherefooru}
        \overset{\circ}{\nabla}_{\nu} u_{\mu}+\overset{\circ}{\nabla}_{\mu} u_{\nu}=2(k_1 a_{\mu \nu}+ k_2 u_{\mu} u_{\nu}).
    \end{equation}

    Contractions of the above equation with $u^{\mu}$ and $\dot{x}^{\mu}$ reveal a relation between the coefficients $k_1$ and $k_2$, as follows. First, using $u^{\mu} \overset{\circ}{\delta}_{\mu} U=0$, the contraction of equation \eqref{eq:contractherefooru} with $u^{\mu}$ gives
    \begin{equation}
       u^{\mu} \overset{\circ}{\nabla}_{\nu} u_{\mu} =2( k_1 u_{\nu} + k_2 \epsilon u_{\nu} ).
    \end{equation}
    Then, taking into account that $u^{\mu} \overset{\circ}{\nabla}_{\nu} u_{\mu}=\frac{1}{2} \overset{\circ}{\nabla}_{\nu} \left( u^\mu u_\mu\right)=\frac{1}{2} \overset{\circ}{\nabla}_{\nu} \epsilon=0$,  and since $u_{\nu} \neq 0$, we get
    \begin{equation}
        k_1=-k_2 \epsilon.
    \end{equation}
    Therefore, we can characterize $\dot x^{\mu} \overset{\circ}{\nabla}_{\mu} U$ by a single free function
    \begin{equation}
        \dot x^{\mu} \overset{\circ}{\nabla}_{\mu} U=k(|b|)(U^2-\epsilon A).
    \end{equation}
    Substituting the obtained expression for $\dot x^{\mu} \overset{\circ}{\nabla}_{\mu} U$ in equation \eqref{eq:contractedxmu_generalized_case}, we find that the derivative 
    \begin{equation}
      \dot x^{\mu} \partial_{\mu} |b| = k(|b|) \mathcal{E} \left( U - \epsilon \frac{A}{U} \right) + \mathcal{W}U
    \end{equation}
    can be again completely expressed in terms of $|b|$, $U$ and $A$. On the other hand, the left-hand side is obviously linear in $\dot{x}$, which means the only possibility for the equality to be satisfied is that there exists some $\lambda=\lambda(|b|)$ such that
    \begin{equation}
        \dot x^{\mu} \partial_{\mu} |b| =\lambda(|b|)U;
    \end{equation}
    by $\dot x^{\rho}$ differentiation, we finally obtain
\begin{equation}\label{eq:contracttoreveal_lambda_phi}
        \partial_{\rho} |b| =\lambda(|b|) u_{\rho},
    \end{equation}
which completes the proof of the lemma.
\end{proof}


The above result allows us to finally calculate the Levi-Civita covariant derivative of $u$, as follows.
\begin{lemma} \label{lem:.covariant_derivs_u}
If $\nabla$ is metrizable by a generalized $(\alpha,\beta)$-metric $L=A\Phi(|b|,p)$, then, corresponding to each chart domain, there exists a smooth function $\tau = \tau (|b|)$ such that
    \begin{equation}
        \overset{\circ}{\nabla}_{\mu} u_{\nu}=\tau \left( a_{\mu \nu} - \epsilon u_{\mu} u_{\nu} \right).
    \end{equation}
    
\end{lemma}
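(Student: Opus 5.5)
We will obtain $\overset{\circ}{\nabla}_{\mu}u_{\nu}$ by computing its symmetric and antisymmetric parts separately. The symmetric part comes directly from the proof of Lemma~\ref{lem:partial_derivs_|b|}. The antisymmetric part comes from the closedness forced by $\partial_\mu|b|=\lambda u_\mu$.

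\emph{Symmetric part.} In the proof of Lemma~\ref{lem:partial_derivs_|b|} we established $\dot x^{\mu}\overset{\circ}{\nabla}_{\mu}U=k(|b|)(U^2-\epsilon A)$. Since $\dot x^{\mu}\overset{\circ}{\nabla}_{\mu}U=\dot x^\mu\dot x^\nu\overset{\circ}{\nabla}_{\mu}u_{\nu}$, differentiating twice with respect to $\dot x$ gives
\begin{equation*}
\overset{\circ}{\nabla}_{\mu}u_{\nu}+\overset{\circ}{\nabla}_{\nu}u_{\mu}=2k\left(u_\mu u_\nu-\epsilon a_{\mu\nu}\right)=-2\epsilon k\left(a_{\mu\nu}-\epsilon u_\mu u_\nu\right).
\end{equation*}
So once the antisymmetric part is shown to vanish, the claim follows with $\tau:=-\epsilon k(|b|)$, which depends on $|b|$ only. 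As a consistency check, the tensor $a_{\mu\nu}-\epsilon u_\mu u_\nu$ is annihilated by $u^\mu$ and by $u^\nu$. This agrees with $u^\nu\overset{\circ}{\nabla}_\mu u_\nu=0$ and with $u^\mu\overset{\circ}{\nabla}_\mu u_\nu=0$, the latter being point 1 of Lemma~\ref{lem:U_prime_u_zero} differentiated in $\dot x$.

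\emph{Antisymmetric part.} We use Lemma~\ref{lem:partial_derivs_|b|}, $d|b|=\lambda(|b|)\,u$. Applying $d$ gives
\begin{equation*}
0=\lambda'(|b|)\,d|b|\wedge u+\lambda\,du=\lambda'\lambda\,u\wedge u+\lambda\,du=\lambda\,du .
\end{equation*}
Hence $du=0$, i.e. $\overset{\circ}{\nabla}_{\mu}u_{\nu}=\overset{\circ}{\nabla}_{\nu}u_{\mu}$, wherever $\lambda\neq0$. Adding the symmetric and antisymmetric parts then yields the stated formula.

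\emph{Main obstacle: points where $\lambda=0$.} On an open set where $\lambda$ vanishes, $|b|$ is locally constant. There $\Phi$ effectively depends on $p$ only, the dependence on $|b|$ disappears from the metrizability equations, and we are back in the standard $(\alpha,\beta)$ situation excluded by the Remark (assumption~\eqref{eq:phi_b}). We would also invoke the requirement that $\lambda$ be nowhere zero, as in condition 3 of Theorem~\ref{thm:metrizability_general_ab}. Isolated zeros of $\lambda$ are then handled by continuity of $du$.

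If this reduction is unconvincing, the fallback is to prove the antisymmetric part directly from the uncontracted equations \eqref{eq:general_(alpha,beta)_D_with_Psi}. Substitute $\partial_\mu|b|=\lambda u_\mu$ and $\overset{\circ}{\delta}_\mu U=\dot x^\nu\overset{\circ}{\nabla}_\mu u_\nu$, together with \eqref{D_x_1} and \eqref{D_u_1}. Every term except $A\,U\,\Psi'_p\,\dot x^\nu\overset{\circ}{\nabla}_\mu u_\nu$ is a combination of $u_\mu$ and $\dot x_\mu$ with coefficients depending on $|b|,A,U$. Projecting onto the $a$-orthogonal complement of $u$ and using the polynomial-factor arguments from the earlier lemmas forces $\overset{\circ}{\nabla}_\mu u_\nu\dot x^\nu=\sigma(\dot x_\mu-\epsilon U u_\mu)$ for some function $\sigma$. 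Differentiating in $\dot x^\nu$ shows that $\overset{\circ}{\nabla}_\mu u_\nu$ is symmetric, independently of $\lambda$.
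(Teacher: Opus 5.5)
Your proposal is correct overall. Your main route differs from the paper's, but your fallback is essentially the paper's proof.

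The paper never splits $\overset{\circ}{\nabla}_\mu u_\nu$ into symmetric and antisymmetric parts. It substitutes $\partial_\mu|b|=\lambda u_\mu$ and the contracted distortions into the full uncontracted system and divides by $-AU\Psi'_p$ to get $\overset{\circ}{\delta}_\mu U=\mathcal{S}\,U u_\mu+\mathcal{T}\,\dot x_\mu$. It then uses $u^\mu\overset{\circ}{\delta}_\mu U=0$ to obtain $\mathcal{S}=-\epsilon\mathcal{T}$. It argues that $\mathcal{T}$ depends only on $|b|$, because for a suitable index $\mu_0$ it equals a ratio of two linear forms in $\dot x$. Finally it differentiates once in $\dot x$.

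Your main route instead recycles the symmetric part $-\epsilon k(|b|)(a_{\mu\nu}-\epsilon u_\mu u_\nu)$ from the proof of the previous lemma. It then kills the antisymmetric part through $0=d(\lambda u)=\lambda\,du$. This is shorter and makes transparent that closedness of $u$ is forced by $d|b|\propto u$. Its costs are that it only works where $\lambda\neq0$, and that it produces $\tau=-\epsilon k$ without tying $\tau$ to $\Psi$.

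The paper's route needs no condition on $\lambda$. As a by-product it yields $\mathcal{T}=\tau$ and $\mathcal{S}=-\epsilon\tau$, which is precisely the reduced system \eqref{eq:first}--\eqref{eq:second} that the proof of Theorem~\ref{thm:metrizability_general_ab} goes on to integrate. With your route, those relations would have to be derived separately afterwards.

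Your first treatment of the case $\lambda=0$ does not hold up as stated. The assumption \eqref{eq:phi_b} is a condition on $\Phi$ as a function of two variables, not on $d|b|$. So an open set on which $|b|$ is constant is not excluded by the Remark. Moreover, ``$\lambda$ nowhere zero'' is part of the conclusion of Theorem~\ref{thm:metrizability_general_ab}, so invoking it here is circular.

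On such a set you genuinely need the fallback. Alternatively, note that there $L$ is literally an $(\alpha,\beta)$-metric in $b$, so the closedness remark following Lemma~\ref{prop:sufficientcondalphabeta} gives $db=0$, hence $du=0$.

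In the fallback, make explicit that $\sigma$ (which coincides with the paper's $\mathcal{T}$) is independent of $\dot x$ before you differentiate. Otherwise terms with $\dot\partial_\nu\sigma$ spoil the symmetry. Once that is shown, the fallback alone gives $\overset{\circ}{\nabla}_\mu u_\nu=\sigma(a_{\mu\nu}-\epsilon u_\mu u_\nu)$ directly, and the symmetric/antisymmetric split is no longer needed.
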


\begin{proof}
    Substituting the above found expressions for $\partial_{\mu}|b|$ together with  \eqref{D_x_1}, \eqref{D_u_1}, into the full system \eqref{eq:general_(alpha,beta)_D_with_Psi} turns it, after a direct computation, into:
\begin{eqnarray}
-AU\Psi _{p}^{\prime }\overset{\circ }{\delta }_{\mu }U &=&\left[ \left( 
\dfrac{1}{2}\lambda \Psi _{\left\vert b\right\vert }^{\prime }-\dfrac{1}{2}%
\left\vert b\right\vert c_{1}\right) A^{2}+\left( -\dfrac{1}{2}\left\vert
b\right\vert ^{3}c_{3}-\dfrac{1}{2}\left\vert b\right\vert \Psi _{p}^{\prime
}c_{1}-\dfrac{1}{2}\left\vert b\right\vert ^{3}\epsilon \Psi _{p}^{\prime
}c_{3}\right) AU^{2}+\dfrac{1}{2}\left\vert b\right\vert ^{3}\Psi
_{p}^{\prime }c_{3}U^{4}\right] u_{\mu } \\
&&+\left[ \left( \dfrac{1}{2}\left\vert b\right\vert \epsilon \Psi
_{p}^{\prime }c_{1}-\left\vert b\right\vert c_{2}-\left\vert b\right\vert
\epsilon \Psi _{p}^{\prime }c_{2}\right) AU+\left\vert b\right\vert \Psi
_{p}^{\prime }c_{2}U^{3}\right] \dot{x}_{\mu }.
\end{eqnarray}
Dividing by $-AU\Psi _{p}^{\prime }$ and substituting $A=\dfrac{U^{2}}{p}$, this can be recast as
\begin{equation}\label{eq:U_mu_S_T}
\overset{\circ }{\delta }_{\mu }U=\mathcal{S}Uu_{\mu }+\mathcal{T}\dot{x}_{\mu },
\end{equation}
where the expressions
\begin{eqnarray}\label{eq:expressions_Of_S_and_T}
\mathcal{S} &:&=-\dfrac{1}{2\Psi _{p}^{\prime }}\left[ \dfrac{\lambda }{p}
\Psi _{\left\vert b\right\vert }^{\prime }+\left\vert b\right\vert \left(
-c_{1}+\left\vert b\right\vert ^{2}c_{3}\left( p-\epsilon \right) \right)
\Psi _{p}^{\prime }-\left\vert b\right\vert \left( \left\vert b\right\vert
^{2}c_{3}+\dfrac{c_{1}}{p}\right) \right] ,  \label{def_S} \\
\mathcal{T} &:&=-\dfrac{1}{2\Psi _{p}^{\prime }}\left\vert b\right\vert
\left( 2pc_{2}+\epsilon c_{1}-2\epsilon c_{2}\right)   \label{def:T}
\end{eqnarray}
depend on $\left\vert b\right\vert $ and $p$ only. \\
Taking into account the relation $u^{\mu} \overset{\circ}{\delta}_{\mu} U=0$, the latter reveals, after contracting with $u^{\mu}$ and dividing by $U$, that $\mathcal{T}$ and $\mathcal{S}$ are related as
\begin{equation}\label{eq:relation_between_T_and_S}
    \mathcal{T}=-\mathcal{S} \epsilon \iff \mathcal{S}=-\epsilon \mathcal{T}.
\end{equation}
That is, the four equations \eqref{eq:U_mu_S_T} now read
\begin{equation}\label{eq:x-Uu}
    \overset{\circ}{\delta}_{\mu} U = \mathcal{T} \left(\dot x_{\mu} - \epsilon U u_{\mu} \right),
\end{equation}
where, in principle, $\mathcal{T}$ could depend on both $|b|$ and $p$. 
We will show that, actually, $\mathcal{T}$ can only depend on $|b|$. To this aim, we note that there exists at least one $\mu = \mu_0 \in \{0,1,2,3\}$, such that the factor $\dot{x}_{\mu_0}- \epsilon U u_{\mu_0}$ does not vanish; indeed, assuming the contrary, we would get by contraction of \eqref{eq:x-Uu} with $\dot{x}^{\mu}$ that $A-\epsilon U^2= 0$, which would imply that $A=\epsilon U^2$ is degenerate.
Consequently, there exists at least one index $\mu = \mu_0 \in \{0,1,2,3\}$ for which we can write
\begin{equation}\label{eq:Texpressedfrommetrizability}
    \mathcal{T}=\frac{\overset{\circ}{\delta}_{\mu_0}U}{ \dot x_{\mu_0} - \epsilon U u_{\mu_0}}.
\end{equation}
In the above equality, the left-hand side is a function of $|b|$ and $p$, its $\dot x$ dependence is completely encoded in the 2-homogeneous in $\dot{x}$, relatively prime polynomial expressions $A$ and $U^2$. On the other hand, the right-hand side is a ratio of linear expressions in $\dot x$. This can only be achieved if $  \mathcal{T}=\tau(|b|)$ only.\\
 Substituting this into \eqref{eq:Texpressedfrommetrizability} and differentiating with respect to $\dot{x}^\nu$, we obtain the claim of the lemma.
\end{proof}

Using the above lemmas, we are now able to prove Theorem \ref{thm:metrizability_general_ab}.

\subsection{Proof of the main statement  }\label{ssec:prfmain}
We start with a remark on the necessity of conditions 1.-3. in the statement of Theorem \ref{thm:metrizability_general_ab}.

\begin{enumerate}
\item First, we note that the constraints on the derivatives $\partial_{\mu} |b|$ and $\overset{\circ}{\nabla} u$ obtained in Lemmas \ref{lem:partial_derivs_|b|} and \ref{lem:.covariant_derivs_u} are given by tensorial expressions, which agree on chart overlaps, hence, they are globally well defined. Moreover, they give, up to the precise expression of $\tau$, the coordinate expressions of the differential constraints \eqref{eq:db} in point  3. of Theorem \ref{thm:metrizability_general_ab}.

\item Second, substituting the obtained relations $\mathcal{T} = \tau, \mathcal{S} = -\epsilon {\tau} $ into \eqref{def_S} and \eqref{def:T}, the metrizability conditions reduce to the PDE system
\end{enumerate}
    \begin{eqnarray}
- \epsilon \tau &&=-\dfrac{1}{2\Psi _{p}^{\prime }}\left[ \dfrac{\lambda }{p}
\Psi _{\left\vert b\right\vert }^{\prime }+\left\vert b\right\vert \left(
-c_{1}+\left\vert b\right\vert ^{2}c_{3}\left( p-\epsilon \right) \right)
\Psi _{p}^{\prime }-\left\vert b\right\vert \left( \left\vert b\right\vert
^{2}c_{3}+\dfrac{c_{1}}{p}\right) \right] \label{eq:first},   \\
\tau &&=-\dfrac{1}{2\Psi _{p}^{\prime }}\left\vert b\right\vert
\left( 2pc_{2}+\epsilon c_{1}-2\epsilon c_{2}\right) \label{eq:second} .
\end{eqnarray}
This is still an overdetermined system in the unknown $\Psi = \ln \Phi$, containing the free functions $\lambda=\lambda(|b|)$ and $\tau = \tau(|b|)$, as well as the constants $c_1,c_2,c_3$ as parameters. We distinguish two branches, depending on whether $\tau$ identically vanishes or not; its case by case integration will reveal the remaining conditions 1.-2. of Theorem \ref{thm:metrizability_general_ab}, as well as the precise expression of $\tau$.

\bigskip

\noindent \textbf{Branch 1: $\tau =0$}.  

In this case, recalling that $|b|$ cannot identically vanish, see the remark around \eqref{eq:phi_b}, we find ourselves in case (i) of Theorem \ref{thm:metrizability_general_ab}, as \eqref{eq:second} implies that:
\begin{equation}
c_1 = c_2 = 0,\,\,\text{hence,} \,\, c_3 \neq 0.
\end{equation}
We note that, since $c_1 =0$, the expression $\tau = 0 $ is consistent with relation \eqref{eq:tau_general}.
Moreover, the only nontrivial equation is the first one \eqref{eq:first}, which becomes
\begin{equation}\label{eq:PDE_case_c2_c_1_zero}
\lambda \Psi _{\left\vert b\right\vert }^{\prime }+\left\vert b\right\vert
^{3}pc_{3}\left( p-\epsilon \right) \Psi _{p}^{\prime }-\left\vert
b\right\vert ^{3}pc_{3}=0.
\end{equation}
The general solution of \eqref{eq:PDE_case_c2_c_1_zero} is
\begin{equation}
\Psi({|b|,p})=c_3 \epsilon \int \frac{|b|^3}{\lambda(|b|)} d|b| + \ln \left( \frac{p}{\epsilon} \right)+F\left( \frac{e^{-c_3 \epsilon \int \frac{|b|^3}{\lambda(|b|)}d|b|}  (\epsilon -p)}{p \epsilon} \right),
\end{equation}
where $F$ is a free function. Exponentiating leads to the desired solution
\begin{equation}\Phi(|b|,p)=  \frac{p}{\epsilon}  \exp\left( {c_3 \epsilon \int \frac{|b|^3}{\lambda(|b|)} d|b|} + F \left( \frac{e^{-c_3 \epsilon \int \frac{|b|^3}{\lambda(|b|)} d|b|}(\epsilon -p) }{p \epsilon}\right) \right) \; ,\end{equation}
which is precisely the claimed solution in case $(i)$ of Theorem \ref{thm:metrizability_general_ab}.

\bigskip

\noindent \textbf{Branch 2: $\tau \neq 0$}:

If $\tau \neq 0$, then the second equation \eqref{eq:second} can be directly integrated to give
\begin{equation}\label{eq:substitute_metrizabilitypsi}
    \Psi(|b|,p)=-\frac{1}{2} \frac{|b|}{\tau(|b|)} \left( p^2 c_2 + p \epsilon (c_1 - 2c_2) + k_1\right), \; \; 
\end{equation}
where $k_1 = k_1(|b|)$ is a free function. \\
This gives by $|b|$-differentiation:
\begin{equation}\label{eq:identifyp1}
    \Psi^{\prime}_{|b|}=\left(-\frac{1}{2 \tau^2} c_2 (\tau - |b| \tau ') \right)p^2+ \left(-\frac{\epsilon}{2 \tau^2} (\tau - |b| \tau^{\prime})(c_1-2c_2)\right)p+\left(-\frac{1}{2\tau^2} k_1 \left(\tau - |b| \tau^{\prime} \right) - \frac{|b|}{2\tau} k_1^{\prime} \right).
\end{equation}
On the other hand, $\Psi'_{|b|}$ can also be obtained from equation \eqref{eq:first}, by substituting $\Psi'_{p}$ from \eqref{eq:second}, as
\begin{equation}\label{eq:identifyp2}
\begin{aligned}
    \Psi^{\prime}_{|b|}=&p^3 \left(\frac{ |b|^4}{\lambda \tau} c_2 c_3 \right)+p^2 \left( -\frac{2 \epsilon |b| c_2}{\lambda}-\frac{|b|^2 c_1 c_2}{\tau \lambda} - \frac{2 \epsilon c_2 c_3 |b|^4}{\tau \lambda} + \frac{|b|^4}{2 \tau \lambda} c_1 c_3 \epsilon \right)\\
    +& p \left(- \frac{|b|}{\lambda} c_1 + 2 \frac{|b|}{\lambda} c_2 - \frac{|b|^2}{2 \lambda \tau} \epsilon  c_1^2 + \frac{|b|^2}{\tau \lambda}\epsilon c_1 c_2 - \frac{|b|^4}{2 \lambda \tau} c_3 c_1 + \frac{|b|^4}{\lambda \tau} c_3 c_2 + \frac{|b|^3}{\lambda} c_3\right) +\frac{|b|}{\lambda} c_1.
\end{aligned}
\end{equation}
Identifying the powers of $p$ in expressions \eqref{eq:identifyp1} and \eqref{eq:identifyp2} , taking into account that $\lambda \neq 0, \tau \neq 0, |b| \neq 0$, then gives
\begin{enumerate}
    \item[$\triangleright$] $p^3: c_2 c_3=0$.
    \item[$\triangleright$] $p^2: \frac{|b|}{2}\frac{-4 \tau \epsilon  c_2 - 2|b| c_1 c_2 -4 \epsilon c_2 c_3 |b|^3 + |b|^3 c_1 c_3 \epsilon}{\tau \lambda}+\frac{1}{2 \tau^2} c_2 (\tau - |b| \tau^{\prime})=0$.
    \item[$\triangleright$] $p: \frac{|b|}{2} \frac{-2c_1 \tau +4 c_2 \tau -|b| \epsilon c_1^2 + 2 |b| \epsilon c_1 c_2 -|b|^3 c_1 c_3 + 2 c_2 c_3 |b|^3 + 2 |b|^2 c_3 \tau}{\lambda \tau}+\frac{\epsilon}{2 \tau ^2} \left(\tau - |b| \tau^{\prime} \right) \left( c_1 -2 c_2 \right)=0.$
    \item[$\triangleright$] $p^0: \frac{|b|}{\lambda} c_1+ \left( \frac{k_1}{2 \tau^2} \left( \tau - |b| \tau^{\prime} \right) + \frac{|b|}{2 \tau} k_{1}^{\prime} \right)=0.$
\end{enumerate}
From the coefficient of $p^{3}$, we find that either $c_2=0$ or $c_3=0$.  This gives rise to two sub-branches:

\begin{enumerate}
    \item[\bf{2.a,}] {\textbf{$\mathbf{c_2 = 0}$:}}
    
    In this case, the restriction obtained from the coefficient of $p^2$ imposes $c_1 = 0$ or $c_3 = 0$.  It turns out, however, that the case $c_1=0$ leads, by the equation \eqref{eq:second},  to $\tau=0$, in contradiction with our hypothesis $\tau \neq 0$. Hence, if $c_2=0$, the only possibility is
    \begin{equation}
        c_3 = 0, \text{hence} \,\, c_1 \neq 0,
    \end{equation}
which implies the necessity of condition 2. in the statement of the theorem for this case.
    This corresponds to the case (ii). a) in the statement of the theorem.
    Using $c_2=c_3=0$, the (remaining) coefficients of $p$ and $p^0$ give the restrictions
    \begin{equation}
    \begin{cases}
        2 |b| \tau^2 - \epsilon \lambda \tau + |b|^2 \tau \epsilon c_1 
        + |b| \lambda \epsilon \tau' = 0, \\[6pt]
        k_{1}^{\prime} = -\dfrac{2}{|b|} 
        \left(
        \frac{k_1}{2\tau}\left(\tau - |b| \tau^{\prime} \right)
        + \frac{|b| \tau}{\lambda} c_1
        \right).
    \end{cases}
    \end{equation}
    The equation for $\tau$ is of Bernoulli-type and can be directly integrated to yield
    \begin{equation}\label{eq:tau_case_ii}
        \tau(|b|)=\frac{ c_1 |b|}{ c_1 C_1 e^{c_1 \rho(|b|)} - 2 \epsilon},
    \end{equation}
    which is precisely \eqref{eq:tau_general}. Substitution of  $\tau$ into the equation for $k_1$ plus a direct integration gives
    \begin{equation}\label{eq:k_1_case_ii}
        k_1(|b|)=\frac{C_2 - 2 c_1 \rho(|b|)}{C_1 e^{c_1 \rho(|b|)}-2\frac{\epsilon}{c_1}}.
    \end{equation}
    Substituting these values into equation \eqref{eq:substitute_metrizabilitypsi}  leads to the desired solution,
    \begin{equation}
        \Psi(|b|,p)=-\dfrac{1}{2}\dfrac{c_{1}C_{1}e^{c_{1}\rho(|b|) }-2\epsilon }{c_{1}}\left( p \epsilon c_1 + \frac{C_2- 2 c_1 \rho(|b|)}{C_1 e^{c_1 \rho(|b|)}-2 \frac{\epsilon}{c_1}} \right),
    \end{equation}
    which exponentiates to
    \begin{equation}
        \Phi(|b|,p)=\exp \left( \left(\frac{\epsilon}{c_1}- \frac{C_1}{2} e^{c_1 \rho(|b|)} \right) \left( p \epsilon c_1 + \frac{C_2  - 2 c_1 \rho(|b|)}{C_1 e^{c_1 \rho(|b|)}-2 \frac{\epsilon}{c_1}} \right) \right)
    \end{equation}
    as claimed for case $(ii)$ $(a)$.  
    
    \item[\bf{2.b,}] {\textbf{$\mathbf{c_2 \neq 0}$:}}
    
    In this case, the $p^3$-equation above implies 
    \begin{equation}
        c_3 =0.
    \end{equation}
    Further, the restriction obtained from the coefficient of $p^2$ gives an ordinary differential equation for $\tau$
    \begin{equation}\label{eq:conditionfromp^2_case_iii}
        \tau'=-\frac{4 \epsilon}{\lambda} \tau^2 + \left(\frac{1}{|b|} - \frac{2 |b| c_1}{\lambda} \right) \tau.
    \end{equation}
    Substituting this into the restriction obtained from $p^1$ gives the condition
    \begin{equation}
        (c_1-2c_2)\left(2 \tau \epsilon + |b| c_1 \right)=0.
    \end{equation}
    The above equation holds if either $2 \tau \epsilon + |b| c_1=0$, or $c_1=2c_2$.  This gives again, a splitting into two subcases. Again, we see that $c_1 =0$ is impossible, as it would imply either $\tau =0$ (in contradiction with our hypothesis), or $c_1 = c_2 =c_3 =0$, i.e., to trivial nonmetricity, which is excluded, This proves the necessity of Condition 2. in this case.
    
    \begin{enumerate}
        \item[\bf{2.b.I,}] {\textbf{$\mathbf{c_2 \neq 0,\ c_1\neq 2 c_2}$:}}
        
        In this case, we must necessarily have $2 \tau \epsilon + |b| c_1=0$, which gives the algebraic expression for $\tau$
        \begin{equation} \label{eq:tau_case_iii}
            \tau(|b|)=-\frac{|b| c_1}{2 \epsilon},
        \end{equation}
which is again, \eqref{eq:tau_general}  with $C_1 = 0$ and also solves the condition obtained from $p^2$ \eqref{eq:conditionfromp^2_case_iii}.  

Moreover, we note that equation  \eqref{eq:tau_case_iii} also implies that
        \begin{equation}
            \tau -|b| \tau'=0.
        \end{equation}Hence, in this case, we are left with the condition for $p^0$:
        \begin{equation}
            k_1^{\prime}=\frac{c_1^2 |b| }{\epsilon \lambda},
        \end{equation}
        which can be directly integrated to yield
        \begin{equation}\label{eq:k_1_case_iii}
            k_1(|b|)=\epsilon c_1^2 \rho(|b|) +C_3, \; \; \text{where} \;\; C_3 \; \; \text{is a real constant}.
        \end{equation}
        Finally, substituting the obtained value of $k_1$ into $k_1(|b|)$  into equation \eqref{eq:substitute_metrizabilitypsi} and exponentiating the result gives
        \begin{equation}
            \Phi(|b|,p)=\exp \left( \frac{\epsilon}{c_1} \left(p^2 c_2 +p \epsilon(c_1 - 2c_2) + \epsilon c_1^2 \rho(|b|)+C_3 \right) \right) ,
        \end{equation}
        which is precisely the claimed solution for case $(ii)$ $(b)$.
        
        \item[\bf{2.b.II,}] {\textbf{$\mathbf{c_2 \neq 0,\ c_1= 2 c_2}$:}}

        In this situation, recalling that $c_3=0$, the restriction obtained from the $p^1$ equation is identically satisfied. From the $p^2$ constraint we obtain
        \begin{equation}
            \tau '= -\frac{4 \epsilon}{\lambda} \tau^2 + \left(\frac{1}{|b|}-\frac{2 |b| c_1}{\lambda} \right) \tau,
        \end{equation}
        while the $p^{0}$ equation gives
        \begin{equation}
            k_1^{\prime}=\left(\frac{\tau^{\prime}}{\tau} - \frac{1}{|b|} \right) k_1 - \frac{2}{\lambda} \tau c_1.
        \end{equation}
        Directly integrating the above equations, we find
        \begin{equation}
            \tau(|b|)= \frac{|b| c_1}{-2 \epsilon+c_1 C_1 e^{2 c_1 \rho(|b|)}},  \; \; k_1(|b|)=\frac{c_1}{-2 \epsilon+c_1 C_1 e^{2 c_1 \rho(|b|)}} \left( - 2c_1 \rho(|b|)+C_4 \right), \; \; \text{with} \; \; C_1, C_4 \in \mathbb{R}. 
        \end{equation}
        Substituting these, together with $c_1-2c_2=0$ into equation \eqref{eq:substitute_metrizabilitypsi} and exponentiating leads to the claimed solution for case $(ii)$ $(c)$:
        \begin{equation}
            \Phi(|b|,p)=\exp \left( \left(\frac{\epsilon}{2} - \frac{C_1}{2} c_2 e^{4 c_2 \rho(|b|)} \right)p^2 + \left(2 c_2 \rho(|b|)-\frac{C_4}{2} \right) \right) .
        \end{equation}
    \end{enumerate}
\end{enumerate}

The above case-by-case analysis also showed that we must necessarily have $c_3 = 0$ or $c_1 = c_2 =0$, as announced, as well as the expression \eqref{eq:tau_general} for $\tau$, where $C_1 = 0$ in the case where $c_1 - 2c_2 \neq 0$, which proves the last part of the necessity. \\
Noting that, for generalized $(\alpha,\beta)$ metrics, the determinant is still given by a similar formula to \eqref{eq:detg}, we determine by direct computation that all the solutions found in case $(ii)$ are nondegenerate, while non-degeneracy imposes a constraint (expressed as an ODE) on the free function $F$ in case $(i)$: yet, apart from the solutions of the respective ODE, any choice of $F$ is valid. Moreover, in each of the analyzed cases, we have explicitly shown that a solution to the metrizability conditions \eqref{def:Berwald_conditions} exists, meaning that the given conditions are also sufficient for the Finsler metrizability of the respective connections.

These arguments complete the proof of our main Theorem \ref{thm:metrizability_general_ab}.

\section{Discussion}\label{sec:Discussion}
In this paper we answered the question  under which conditions the autoparallels of affine connections with nonmetricity are extremals of an action functional, for
torsion-free affine connections with vectorial nonmetricity (meaning that the distortion tensor is expressed algebraically in terms of the metric and of a one-form~$b$). 

Specifically, we asked the question of whether their autoparallels are actually, (pseudo-)Finslerian geodesics 'in disguise', a feature known as \textit{Finsler metrizability}; the relevance of such a question is, on the one hand from a mathematical perspective, given by the equivalence between Finsler metrizability and the existence of a parametrization-invariant variational principle for autoparallels, and, on the other hand from a physical perspective, given by the question on the interpretation of autoparallels in metric-affine gravity. 

\bigskip

Our findings are summarized in Theorems \ref{thm:metrizability_classification_alpha_beta} and \ref{thm:metrizability_general_ab}. We find the necessary and sufficient conditions for a connection with vectorial nonmetricity to be metrizable by a Finsler function depending algebraically on its constituents (i.e., on the metric and on the defining one-form $b$), together with the most general form of the respective Finsler functions. In other words, we find the action principle that is extremised by the autoparallels of affine connections with vectorial nonmetricity. The Finsler functions that define the action belong to the so-called class of \textit{generalized $(\alpha,\beta)$-metrics}, admitting as a particular case, the most commonly used in applications class of Finsler functions, namely,  $(\alpha,\beta)$-metrics. 
It turned out that, provided that the one-form $b$ satisfies some specific differential constraints (in particular, it is torse-forming \cite{gherici2026torseformingvectorfieldcertain} and closed), then quite a large class of connections built by means of it are actually, Finsler metrizable. Below is a table indicating the situation for the main examples of connections with vectorial nonmetricity - the famous classes of Weyl, Schrödinger and completely symmetric connections. 

\begin{table}[h!]
\centering
\begin{tabular}{|c|c|c|}
\hline
Connection & ($\alpha,\beta)$-metrizable & Generalized $(\alpha,\beta)$-metrizable\\ \hline
Weyl $(c_1 \neq 0, c_2=0, c_3=0)$ & {\color{green}\ding{51}} & {\color{green}\ding{51}}\\ \hline
Schrödinger $(c_3=0,c_1+2c_2=0)$ & {\color{red}\ding{55}} & {\color{green}\ding{51}} \\ \hline
Completely symmetric $(c_1=c_2)$ & {\color{green}\ding{51}} & {\color{green}\ding{51}} \\ \hline
\end{tabular}
\end{table}

Since we found an explicit relation between Berwald-Finsler structures and metric-affine geometry, a very interesting future direction of research is to use the above found Finsler functions as ansatzes in Finsler gravity and to compare the resulting field equations with the ones usually employed in metric-affine gravity. Are they related, or do the Finsler gravity equations lead to new, not yet studied dynamics? In particular, the Finsler gravity equation naturally couples the 1-particle distribution function of kinetic gases to spacetime geometry, this line of research will lead to a new perspective on gravity theories with nonmetricity and the matter model that sources these geometries. Concrete interesting physical systems we plan to study are compact objects and the evolution of the universe, to investigate if these geometries can give further insights on an explanation of dark matter and dark energy.

\acknowledgments{L.Cs. is grateful for the support of Collegium Talentum, Hungary. C.P. acknowledges support by the excellence cluster QuantumFrontiers of the German Research Foundation (Deutsche Forschungsgemeinschaft, DFG) under Germany's Excellence Strategy -- EXC-2123 QuantumFrontiers -- 390837967 and was funded by the Deutsche Forschungsgemeinschaft (DFG, German Research Foundation) - Project Number 420243324. L.Cs., C.P. and N.V. would like to acknowledge networking support by the COST Actions CA23130
“Bridging high and low energies in search of quantum gravity (BridgeQG)” and CA21136
“Addressing observational tensions in cosmology with systematics and fundamental physics
(CosmoVerse)”. }
\bibliographystyle{unsrt}
\bibliography{metrize}

\end{document}